\theoremstyle{thmstyleone}%
\newtheorem{theorem}{Theorem}
\newtheorem{corollary}{Corollary}
\newcommand{\sset}[1]{\mathcal{#1}}
\DeclareMathOperator{\sgn}{sgn}
\newcommand{\mdII}{V${}_2$}
\newcommand{\IIind}{{\text{V}_2}}
\begin{document}

\title{Self-contained relaxation-based dynamical Ising machines}

\author{\fnm{Mikhail} \sur{Erementchouk}}\email{merement@gmail.com}
\author{\fnm{Aditya} \sur{Shukla}}\email{aditshuk@umich.edu}
\author{\fnm{Pinaki} \sur{Mazumder}}\email{pinakimazum@gmail.com}
\affil{\orgdiv{Department of Electrical Engineering and Computer Science},
  \orgname{University of Michigan}, \orgaddress{\city{Ann Arbor},
    \postcode{48104}, \state{MI}, \country{USA}}}


\abstract {
  Dynamical Ising machines are based on continuous dynamical systems
  evolving from a generic initial state to a state strongly related to the
  ground state of the classical Ising model on a graph. Reaching the ground
  state is equivalent to finding the maximum (weighted) cut of the graph,
  which presents the Ising machines as an alternative way to solving and
  investigating NP-complete problems. Among the dynamical models,
  relaxation-based models are distinguished by their relations with
  guarantees of performance achieved in time scaling polynomially with the
  problem size. However, the terminal states of such machines are
  essentially non-binary, necessitating special post-processing relying on
  disparate computing. We show that an Ising machine implementing a special
  continuous dynamical system (called the V${}_2$ model) solves the rounding
  problem dynamically. We prove that the V${}_2$ model, starting from an
  arbitrary non-binary state, terminates in a state that trivially rounds
  to a binary state with the cut at least as big as obtained by optimal
  rounding of the initial state. Besides showing that relaxation-based
  dynamical Ising machines can be made self-contained, this result presents
  a non-Boolean realization of solving a non-trivial information processing
  task on Ising machines. Moreover, we prove that if the initial state of
  the V${}_2$-machine is a random limited amplitude perturbation of a binary
  state, the machine progresses to a state with at least as high cut as
  that of the initial binary state. Since the probability of improving the
  cut is finite, this shows that the V${}_2$-machine with random agitations
  converges to a maximum cut state almost surely. }

\keywords{dynamical computations, combinatorial optimization, Ising machines}

\maketitle

\section{Introduction}

Computational capabilities of the Ising model, the classical spin system on
a graph, attract researchers' attention for a long
time~\cite{kirkpatrickOptimization1983, hopfieldNeurons1984,
  cernyThermodynamical1985, fuApplication1986}. These capabilities can be
attributed to two key properties of the model ground state, the spin
distribution with the lowest energy. First, the ground state solves a
quadratic unconstrained binary optimization
problem~\cite{kochenbergerBinary2013}. Second, by associating the spin
distribution in the ground state with partition of the nodes of the graph,
one obtains the maximum (weighted) cut of the model
graph~\cite{barahonaComputational1982}. Finding the maximum cut is an
NP-complete problem~\cite{karpReducibility1972,gareySimplified1976}, which
puts the Ising model into the general computing
perspective~\cite{lucasIsing2014}.

The focused effort exploring efficient ways of finding the ground state of
the Ising model has led to the emergence of the class of dynamical Ising
machines. These are essentially continuous dynamical systems that evolve
from a generic initial state to a state tightly related to the Ising model
ground state. Consequently, the defining feature of the Ising machines is
to utilize the ability of special continuous dynamical systems to
effectively minimize the energy of the spin distribution.

While investigating Ising machines, the main attention is paid to dynamical
systems with emergent binary (or, in some sense, close to binary)
states~\cite{aaditMassively2022, tatsumuraLargescale2021,
  tatsumuraScaling2021, patelLogically2022, yamamotoSTATICA2021,
  yamaoka20kSpin2016, ahmedProbabilistic2021,moy968node2022,
  afoakwaBRIM2021, leleuScaling2021}. In these machines, the spins are
represented by continuous dynamical variables, with the coupling energy
usually mimicking the spin coupling in the Ising model. The notable
exception is the class of machines implementing the Kuramoto
model~\cite{kuramotoSelfentrainment1975, shinomotoPhase1986,
  moriDissipative1998, acebronKuramoto2005, albertssonUltrafast2021} of
synchronization in a network of coupled phase
oscillators~\cite{wangSolving2021, wangOIM2019}. In such implementations,
the coupling between the dynamical variables is nonlinear and can be
related to the scalar product of unit vectors representing individual
spins~\cite{shinomotoPhase1986, erementchoukComputational2022}. The
emergence of close-to-binary states in all these machines is forced by the
specially constructed energy landscape for individual spins (see, for
instance,~\cite{bohmOrderofmagnitude2021}).

Despite the evolution of the Ising machines towards the optima of the
objective function, the quality of obtained solutions remains an open
problem. In~\cite{erementchoukComputational2022}, we have demonstrated that
the results obtained within the combinatorial optimization theory can be
applied to analyze the solutions produced by selected machines. We applied
this approach to analyze the computing power of Ising machines based on the
Kuramoto model. Essentially, it stems from the correspondence between the
machine evolution and the gradient descent solution of rank-2 semidefinite
programming (SDP) relaxation \cite{goemans879approximation1994,
  goemansImproved1995} of the max-cut problem, also called
Burer-Monteiro-Zhang (BMZ) heuristic~\cite{burerRankTwo2002}. Using the SDP
relaxation as an underlying computing principle is beneficial because it
can reach the theoretical limit on the performance
guarantee~\cite{raghavendraOptimal2008, khotOptimal2004}. Although this
guarantee was not proven for the BMZ heuristic~\cite{burerLocal2005,
  burerNonlinear2003} (see, however,~\cite{boumalNonconvex2016,
  boumalDeterministic2020, bandeiraLowrank2016}), the practical
implementation of this heuristic (\texttt{Circut}~\cite{burerRankTwo2002})
is among the best solvers of the max-cut problem~\cite{dunningWhat2018}.

In~\cite{shuklaScalable2022, shuklaCustom2023}, we explored an alternative
approach to designing an Ising machine. Instead of forcing the emergence of
close-to-binary states, we focused on the computational capabilities of the
dynamical model driving the machine. In~\cite{shuklaScalable2022}, we
introduced a simplified almost-linear dynamical model governing the Ising
machine based on the BMZ heuristic. We have shown that the introduced model
produces solutions characterized by the integrality gap close to that of
the SDP relaxation.
The simplified model was used in~\cite{shuklaCustom2023} to implement a
custom analog integrated circuit on $130$-nm CMOS technology.

For both rank-2 SDP and its almost-linear simplification, the dynamical
variables of the Ising model can be regarded as defined on a unit circle.
Except for selected graph families (for example, bipartite
graphs~\cite{burerRankTwo2002}), the machine settles in a state without a
fixed a priori known distribution of the dynamical variables. Consequently,
the machine's terminal state must be \emph{rounded} to recover a feasible
spin configuration. This can be done by comparing dynamical variables with
a selected direction on the unit circle. The spin distribution obtained
this way produces cut that, generally, depends on the choice of the
rounding direction. Finding the optimal rounding, the direction yielding
the highest cut is a separate optimization problem. Its known solutions
(see, e.g. Chapter~8 in~\cite{punnenQuadratic2022}) require non-dynamical
operations. As a result, optimal rounding relies on external processing
power, which makes the relaxation-based Ising machines non-self-contained.

In the present paper, we show that the problem of optimal rounding of
states associated with representing spins by 2D unit vectors is eliminated in
an Ising machine based on a special dynamical system, which we call the
\mdII{} model. More precisely, starting from an arbitrary non-binary state,
the machine settles in a state producing a cut with the weight at least as
large as produced by the optimal rounding of the initial state. We show
that while the considered machine does not necessarily settle in a binary
state, the binary states obtained by rounding yield the same cut regardless
of the choice of the rounding direction. Thus, the terminal states of the
\mdII{}-machine round trivially.

The importance of these findings is two-fold. First, we show that
relaxation-based Ising machines can be self-contained. Second, we
demonstrate that dynamical systems can directly perform complex information
processing tasks.

The rest of the paper is organized as follows. In
Section~\ref{sec:formal-intro}, we remind the main definitions and define
the models of the main interest. In Section~\ref{sec:main-theorems}, we
prove the main results. In Section~\ref{sec:numerics}, we present the
results of numerical simulations.

\section{Continuous dynamical realizations of the Ising model}
\label{sec:formal-intro}

The classical Ising model considers ensembles of coupled binary spins
($\sigma_m \in \left\{ -1, 1 \right\}$ with $m = 1, \ldots, N$). It can be regarded as
a set of binary numbers on the nodes of finite graph
$\sset{G} = \{\sset{V}, \sset{E}\} $, whose edges indicate coupling between
the spins. In the following, we will represent spin distributions as
vectors $\boldsymbol{\sigma} \in \mathbb{R}^N$ and assume that graph $\sset{G}$ is
connected.

Each distribution $\boldsymbol{\sigma}$ is assigned the energy (the cost)
\begin{equation}\label{eq:Ising-Hamiltonian}
  \mathcal{H}(\boldsymbol{\sigma}) = \frac{1}{2}\sum_{m,n} A_{m,n} \sigma_m \sigma_n,
\end{equation}
where $A_{m,n}$ is the graph adjacency matrix. Our main results
(Theorems~\ref{thm:extremal-manifolds}--\ref{thm:optimal-rounding}) hold for
arbitrarily weighted adjacency matrices. However, to simplify the
discussion, we generally assume that the adjacency matrix is
$\left\{ 0,1 \right\}$-weighted, that is
$A_{m,n} \in \left\{ 0, 1 \right\}$.

The computational significance of the Ising model stems from the
observation that its ground state ($\boldsymbol{\sigma}$ delivering the lowest
$\mathcal{H}(\boldsymbol{\sigma})$) solves the maximum cut
problem~\cite{barahonaComputational1982}. Indeed, any distribution
$\boldsymbol{\sigma}$ defines a partitioning of the graph nodes
$\sset{V} = \sset{V}_+ \cup \sset{V}_-$, where $\sset{V}_+$ and
$\sset{V}_-$ are subsets where $\boldsymbol{\sigma}$ is positive and negative,
respectively. The size of the cut is the number (the total weight) of edges
connecting nodes in $\sset{V}_+$ and $\sset{V}_-$. To evaluate the cut size, we
introduce the counting function, which is equal to $1$, if the edge is cut,
and $0$, otherwise. In terms of the spins incident to the edge, such a
function can be written as
\begin{equation}\label{eq:cut-counting-function}
 \Phi(\sigma_m, \sigma_n) = \frac{1}{2} \left( 1 - \sigma_m \sigma_n \right).
\end{equation}
Thus, we obtain for the cut size
\begin{equation}\label{eq:cut-size}
  \mathcal{C}(\boldsymbol{\sigma}) = \frac{1}{2} \sum_{m,n} A_{m,n} \Phi(\sigma_m, \sigma_n) 
  = \frac{M}{2} - \frac{\mathcal{H}(\boldsymbol{\sigma})}{2},
\end{equation}
where $M = \abs{\sset{E}}$ is the number of graph edges, so that the max-cut
problem can be presented as
$\overline{\mathcal{C}}_{\sset{G}} = \max_{\boldsymbol{\sigma} \in \left\{ -1, 1 \right\}^N }
\mathcal{C}(\boldsymbol{\sigma})$.

Finding the maximum cut is an NP-complete
problem~\cite{karpReducibility1972, gareySimplified1976}. Therefore,
other NP-complete problems can be solved by finding the ground states of
Ising models with specially constructed Hamiltonians, as was demonstrated
in~\cite{lucasIsing2014}. We will consider the cut size as our main
objective function. This simplifies the discussion without changing the
essence of the problem.

While the Ising model is inherently discrete, the problem of its ground
state is easy to reformulate in a continuous form that can be realized in a
continuous dynamical system. We will use this observation to illustrate the
operational principles of the dynamical Ising machines and the challenges
associated with the quality of solutions.

To construct a basic continuous Ising machine, we consider function
$\mathcal{C}_{C}(\boldsymbol{\xi})$ with $\boldsymbol{\xi} \in {[-1, 1]}^N$ obtained from
$\mathcal{C}(\boldsymbol{\sigma})$ by substituting $\sigma_{m} \to \xi_{m}$ in
Eqs.~\eqref{eq:cut-counting-function}~and~\eqref{eq:cut-size}. Since
$A_{m,m} = 0$, function $\mathcal{C}_{C}(\boldsymbol{\xi})$ is linear with respect to
all $\xi_{m}$ and, therefore, does not have isolated critical points inside
the cube ${[-1, 1]}^{N}$. In other words, when
$\mathcal{C}_{C}(\boldsymbol{\xi})$ is maximized over ${[-1, 1]}^{N}$, it reaches the
maximal values at the vertices of the cube, where
$\mathcal{C}_{C}(\boldsymbol{\xi})$ coincides with $\mathcal{C}(\boldsymbol{\sigma})$.

The dynamical system governed by
$\dot{\xi}_m = \partial \mathcal{C}_{C}(\boldsymbol{\xi})/\partial \xi_m$ and constrained by
$\xi_{m} \leq 1$ ensures monotonously increasing
$\mathcal{C}(\boldsymbol{\xi}(t))$. Without going into a detailed analysis, we note
that the system's evolution terminates at critical points of
$\mathcal{C}_{C}(\boldsymbol{\xi})$. It is not difficult to show that the states
reached by this system starting from generic initial conditions are not
necessarily binary (with $\xi_{m} = \pm 1$). However, as follows from the
linear property of $\mathcal{C}_{C}(\boldsymbol{\xi})$, those $\xi_{m}$ that did not
terminate at the boundaries of the interval $[-1, 1]$ can be chosen either
$+1$ or $-1$: both choices will produce binary states yielding the same
cut. When recovering a feasible spin configuration from a non-binary state
does not require special processing, we will say that the state rounds
trivially.

Thus, the dynamical system defined by $\mathcal{C}_{C}(\boldsymbol{\xi})$ operates as
an Ising machine. By construction, starting from a generic initial state,
the machine will evolve towards increasing
$\mathcal{C}_{C}(\boldsymbol{\xi})$. Moreover, states solving problem
$\overline{\boldsymbol{\xi}}  
= \arg\max_{\boldsymbol{\xi} \in {[-1,1]}^{N}}
\boldsymbol{C}_{C}(\boldsymbol{\xi})$ trivially round to spin configurations
yielding the maximum cut of graph $\sset{G}$. However, reaching the global
maximum of $\mathcal{C}_{C}(\boldsymbol{\xi})$ from a generic initial state is not
guaranteed. The dynamical system may encounter and terminate at a local
maximum. It is not difficult to show that the only condition imposed on the
machine's terminal state $\boldsymbol{\xi}(\infty)$ reached from a generic
initial state is the following. Let $\boldsymbol{\sigma}$ be the spin
configuration recovered from $\boldsymbol{\xi}(\infty)$. Then, for each node
$m \in \sset{V}$ at least half of the incident edges are
cut~\cite{erementchoukComputational2022}:
\begin{equation}\label{eq:NMR}
  F_m = \sum_{n} A_{m,n} \sigma_m \sigma_n \leq 0.
\end{equation}
Here, $F_m$ has the meaning of the difference between the weights of uncut
and cut edges incident to node $m$. It is worth noting that
$\xi_{m}(\infty) \notin \left\{ -1, 1 \right\} $ if and only if $F_{m} = 0$. Indeed,
if $F_m \ne 0$, one of the points $\xi_m = \pm 1$ is attracting and the other is
repelling. As a result, the system cannot terminate in a state with
$\xi_m(\infty)$ inside the interval $(-1, 1)$. 

Thus, the terminal states of the dynamical model defined by maximizing
$\mathcal{C}_{C}(\boldsymbol{\xi})$ are determined by the same conditions as the
outcome of the $1$-opt local search. This algorithm formulated in terms of
spin variables works as follows. It checks that for all nodes
condition~\eqref{eq:NMR} holds. If for some $m \in \sset{V}$, one has
$F_{m} > 0$, then the respective spin is reverted:
$\sigma_{m} \to -\sigma_{m}$, which reverts the sign of $F_m$. Such inversion increases
the cut weight and since the maximum cut weight is finite, the algorithm
terminates.

Consequently, the introduced machine can be qualitatively regarded as a
dynamical realization of the 1-opt local search. Thus, while the machine
may perform well on some instances of the max-cut problem, its polynomial
time performance is characterized by the approximation ratio $1/2$. In
particular, this implies that on ``challenging'' instances, the machine
based on the direct continuation of $\mathcal{C}(\boldsymbol{\sigma})$ may perform
similarly to random partitioning.

An alternative approach to a dynamical reformulation of the Ising model
stems from systematically adapting the relaxations of the max-cut problem.
To present this approach in a unified manner, we write the objective
function in the form
\begin{equation}\label{eq:obj-fun-max-cuts}
  \mathcal{C}_{M}\left( \boldsymbol{\xi} \right) = \frac{1}{2}
  \sum_{m,n} A_{m,n} \Phi_M \left( \xi_m - \xi_n \right) ,
\end{equation}
where $M$ denotes the model defined by the core function $\Phi_M$. For the
max-cut problem, $M=I$, that is the objective function is defined on binary
variables $\boldsymbol{\xi} = \boldsymbol{\sigma} \in \left\{ -1, 1 \right\}^N$ and
$\Phi_I(\xi) = \xi^2/4$. The graph of $\Phi_I(\xi)$ consists of three points (see
Fig.~\ref{fig:core-functions}). Various relaxations can be constructed as
interpolations of $\Phi_I$ by continuous functions $\Phi : [-2, 2] \to \mathbb{R}$
periodically continued to functions on $\mathbb{R}$. Given the relaxation, the
equations of motion governing the dynamical system are defined requiring
that the objective function increases:
$\dot{\xi}_{m} = \partial \mathcal{C}_{M}(\boldsymbol{\xi})/\partial \xi_{m}$.

\begin{figure}[tb]
  \centering
  \includegraphics[width=2.8in]{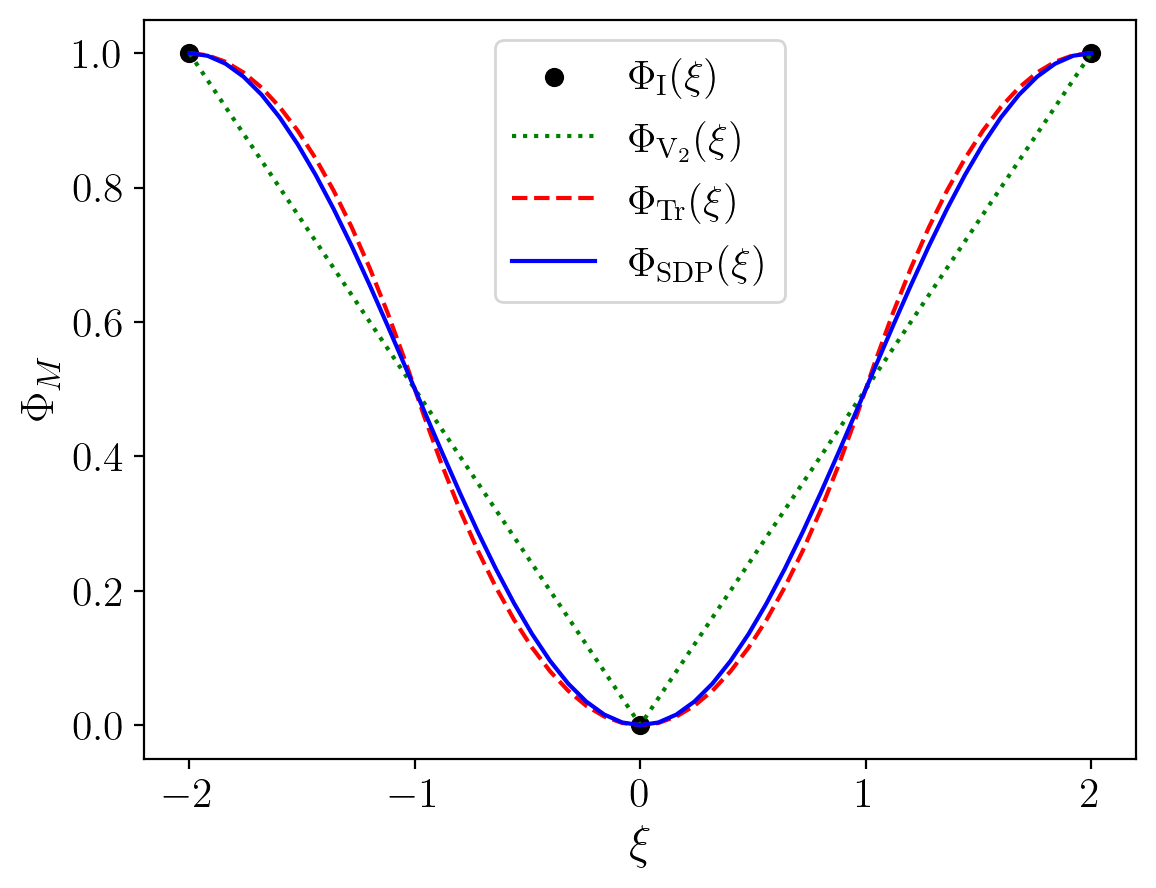}
  \caption{Comparison of core functions of several relaxation-based Ising
    machines: $\Phi_{\mathrm{I}}$ is the discrete function of the binary Ising
  model, $\Phi_{\mathrm{SDP}}(\boldsymbol{\xi})$,
  $\Phi_{\mathrm{Tr}}(\boldsymbol{\xi})$, and $\Phi_{\mathrm{GW}}(\boldsymbol{\xi})$
  are the core functions of the rank-$2$ SDP relaxation, the triangular
  model from Refs.~\cite{shuklaScalable2022, shuklaCustom2023}, and the
  \mdII{} model investigated in the present paper.}
  \label{fig:core-functions}
\end{figure}

Among different relaxations, we emphasize three. The first one corresponds
to the rank-$2$ SDP relaxation~\cite{burerRankTwo2002}, which is given by
$\Phi_{\text{SDP}}(\xi) = \left( 1 - \cos \left( \pi \xi/2 \right) \right)/2$. Its
piece-wise parabolic continuous approximation yields the model investigated
in~\cite{shuklaScalable2022}, where it was dubbed the triangular model.
Near $\xi = 0$ (for $\abs{\xi} \leq 1$), it is defined by
$\Phi_{\text{Tr}}(\xi) = \xi^2/2$, and near $\abs{\xi} = 2$ by
$\Phi_{\text{Tr}}(\xi) = 1 - (\abs{\xi}-2)^2/2$. Finally, the model of the main
interest of the present paper is given for $\abs{\xi} \leq 2$ by
$\Phi_\IIind(\xi) = \abs{\xi}/2$, which is called the \mdII{} model owing to the
shape of $\Phi_\IIind(\xi)$. Respectively, we will call the \mdII-machine the
Ising machine with the dynamics determined by this model. The core
functions of these models are compared in Fig.~\ref{fig:core-functions}.

\section{V${}_2$ model}\label{sec:main-theorems}

\subsection{Model definition}

Assuming that $-2 \leq \xi_m \leq 2$, we can write the objective function of the
model with the core function $\Phi_\IIind(\xi)$ as
\begin{equation}\label{eq:mdII-objective}
  \mathcal{C}_{\IIind}(\boldsymbol{\xi}) = \frac{1}{4} \sum_{m,n} A_{m,n} \abs{\xi_m - \xi_n}_{[-2,2]},
\end{equation}
where $\abs{\ldots}_{[-2,2]}$ is a periodic function with period $P=4$ defined by
$\abs{\xi}_{[-2,2]} = \abs{\xi}$ for $-2 \leq \xi \leq 2$. Alternatively,
$\abs{\ldots}_{[-2,2]}$ can be defined as the distance on a circle with
circumference $4$. Such a definition connects $\mathcal{C}_{\IIind}(\boldsymbol{\xi})$
with a model based on spins represented by unit two-dimensional vectors
$\vec{s}_m$:
\begin{equation}\label{eq:mdII-objective-vector}
  \mathcal{C}_{\IIind}(\boldsymbol{s}) = \frac{1}{2\pi} \sum_{m,n} A_{m,n} \arccos \left( \vec{s}_m
  \cdot \vec{s}_n\right).
\end{equation}
Indeed, defining vectors $\vec{s}_m$ by their polar angles $\pi \xi_m/2$
turns~\eqref{eq:mdII-objective-vector} into~\eqref{eq:mdII-objective}.
A model similar to~\eqref{eq:mdII-objective-vector} (with $N$-dimensional
$\vec{s}_m$) was introduced by Goemans and Williamson
in~\cite{goemansImproved1995}, where it appeared naturally as the average
value of the cut produced by rounding the SDP solution relatively to random
hyperplanes. 

This model has a property that distinguishes it in the family of dynamical
Ising machines. Starting from a non-binary state, the \mdII-machine evolves
to a state that trivially rounds to a spin configuration producing a cut at
least as large as that obtained by the best rounding of the starting state
(see Theorem~\ref{thm:optimal-rounding}). An important consequence of this
property is that the \mdII-machine can be used for rounding and simple
post-processing of the output of relaxation-based Ising machines, as
illustrated by Fig.~\ref{fig:round-flow}. Therefore, a dynamical system
driving the Ising machine can be chosen freely, for instance, on the ground
of efficiency to solve particular instances of optimization problems.

\begin{figure}[tb]
  \centering
  \includegraphics[width=3in]{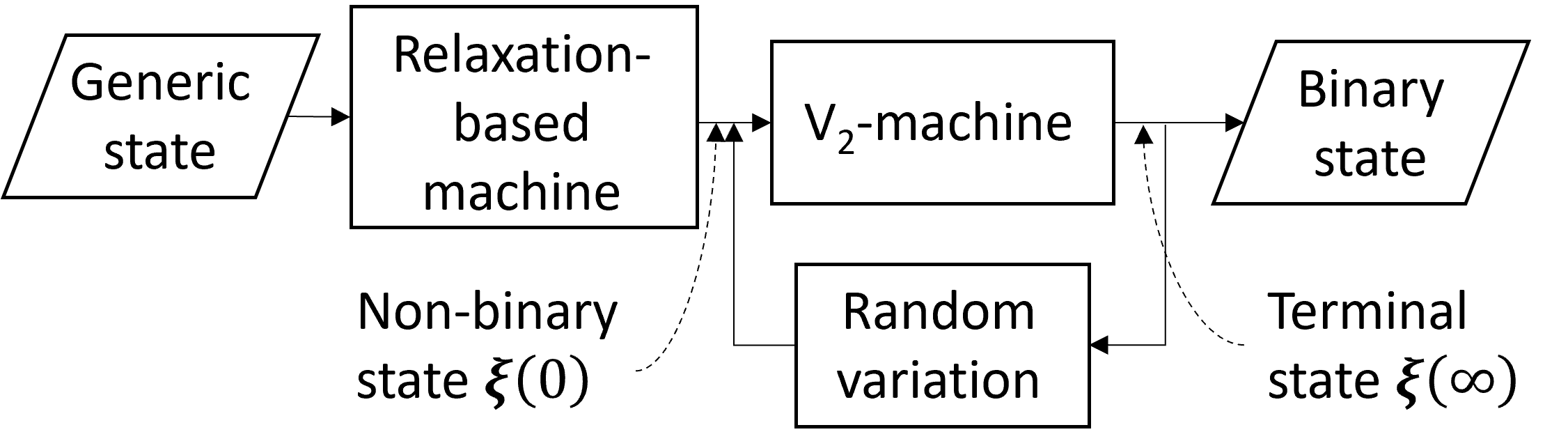}
  \caption{A relaxation-based heterogeneous Ising machine returning a
    binary state. First, a dynamical system based on a relaxation (for
    instance, rank-2 SDP) is initiated by a generic state. Then, the
    dynamical core is switched to the \mdII{} model, which delivers the
    optimal rounding of the state obtained during the first stage and
    performs basic post-processing by small perturbations of the terminal
    state (see Theorem~\ref{thm:mdii-non-worsening}).}
  \label{fig:round-flow}
\end{figure}

\subsection{Model dynamics}

The evolution of the dynamical model realizing the \mdII-machine is governed by
$\dot{\boldsymbol{\xi}} = \nabla_{\boldsymbol{\xi}} \mathcal{C}_{\IIind}(\boldsymbol{\xi})$, which
ensures that $\mathcal{C}_{\IIind}(\boldsymbol{\xi})$ monotonously increases with time.
Thus, for individual nodes, one has
\begin{equation}\label{eq:mdII-eq-motion}
  \dot{\xi}_m = \sum_{n} A_{m,n} \phi_{\IIind}(\xi_m - \xi_n),
\end{equation}
where $\phi_{\IIind}(\xi)$ is a periodic function, which inside the period,
$\xi \in (-P/2, P/2]$,
is defined by $\phi_{\IIind}(\xi) = \sgn(\xi)/2$ for
$\xi \in (-P/2, P/2]$, and $\sgn(\xi)$ is the sign function
\begin{equation}\label{eq:sign-function}
\operatorname{sgn}(\xi) =
\begin{cases}
-1, & \text{if } \xi < 0, \\
0, & \text{if } \xi = 0, \\
1, & \text{if } \xi > 0.
\end{cases}
\end{equation}
Similar equations of motion were empirically studied in the context of the
phase dynamics in a saturated Kuramoto oscillatory neural network
in~\cite{delacourMixedsignal2023}.

It must be noted that
the dynamics of the \mdII-machine requires special attention at the
switching surfaces, $\xi_{m} = \xi_{n}$ for $A_{m,n} \ne 0$, at which
$\phi_{\IIind}\left( \xi_{m} - \xi_{n} \right)$ is discontinuous. The behavior of
discontinuous dynamical systems with switching determined by the dynamical
variables is a complex problem (see for reviews
Refs.~\cite{filippovDifferential1988, cortesDiscontinuous2008}). However,
for the \mdII{} model the problem significantly simplifies because the
dynamics is subject to maximizing $\mathcal{C}_{\IIind}(\boldsymbol{\xi})$.
To avoid obscuring the most interesting features
of the \mdII{} model, which justify a detailed investigation of the model,
we limit ourselves to enforcing the convention $\sgn(0) = 0$. We will
provide a rigorous consideration of the \mdII{} model within the framework
of discontinuous dynamical systems elsewhere.

\subsection{Basic properties of terminal states}
\label{sec:basic-terminals}

Starting from the initial state $\boldsymbol{\xi}(0)$, the machine traverses
the trajectory $\boldsymbol{\xi}(t \vert \boldsymbol{\xi}(0))$ until it reaches the
terminal state $\boldsymbol{\xi}(\infty \vert \boldsymbol{\xi}(0))$ characterized by
$\dot{\boldsymbol{\xi}} = 0$. Thus, the machine's terminal state is a
critical point of $\mathcal{C}_{\IIind}(\boldsymbol{\xi})$. In contrast to the
machines, whose dynamics is determined by $\mathcal{C}_{\text{SDP}}$ and
$\mathcal{C}_{\text{Tr}}$, the \mdII-machine reaches equilibrium in finite time.
Indeed, for out-of-equilibrium states, the rate of changing of
$\mathcal{C}_{\IIind}(\boldsymbol{\xi})$ is limited from below. For example, for
$\left\{ 0,1 \right\}$-weighted graphs, one has
\begin{equation}\label{eq:mdII-out-of-eq}
  \frac{d}{dt} \mathcal{C}_{\IIind}(\boldsymbol{\xi}) =
  \sum_{m} \left( \frac{\partial \mathcal{C}_{\IIind}}{\partial \xi_m} \right)^2 \geq 1.
\end{equation}
As a result, the time needed by the \mdII-machine to reach equilibrium is
limited from above by
$\overline{\mathcal{C}}_{\IIind} = \max_{\boldsymbol{\xi}} \mathcal{C}(\boldsymbol{\xi})$.

Using the same argument as in~\cite{goemansImproved1995}, it can be proven
that $\overline{\mathcal{C}}_{\IIind} = \overline{\mathcal{C}}_{\sset{G}}$, or, in other words,
that $\mathcal{C}_{\IIind}(\boldsymbol{\xi})$ is an \emph{exact} relaxation (see also
Ref.~\cite{steinerbergerMaxCut2023}). Indeed, on the one hand, one has
$\overline{\mathcal{C}}_{\IIind} \geq \overline{\mathcal{C}}_{\sset{G}}$, since the \mdII\ model
is a relaxation. On the other hand, we observe that the probability for a
random point on interval $[-2,2]$ to get between given $\xi_m$ and $\xi_n$ is
$\abs{\xi_m - \xi_n}/4$. Thus, $\mathcal{C}_{\IIind}(\boldsymbol{\xi})$ can be regarded as
the average size of cut obtained by random rounding of $\boldsymbol{\xi}$.
Since the size of cut produced by an individual rounding cannot exceed
$\overline{\mathcal{C}}_{\sset{G}}$, we obtain that
$\overline{\mathcal{C}}_{\sset{G}} \geq \overline{\mathcal{C}}_{\IIind}$.

However, the fact that the global maximum of
$\mathcal{C}_{\IIind}(\boldsymbol{\xi})$ is the maximum cut of the graph is not sufficient
for our purposes, as we are primarily interested in finding partitions
delivering (approximately) the maximum cut rather than the weight of the
maximum cut. Therefore, we need to consider
the structure of states of the \mdII-machine.

We notice that $\mathcal{C}_{\IIind}(\boldsymbol{\xi})$ is invariant with respect to
global translations
$\boldsymbol{\xi} \to \boldsymbol{\xi} + a \boldsymbol{1}$, where $a$ is a real
number and $(\boldsymbol{1})_m = 1$. Therefore, we need to identify as
binary all states that are obtained by displacing some
$\boldsymbol{\sigma}$. Even in view of the extended definition of binary states,
the fact that a representation is exact does not imply that all states
delivering maxima of $\mathcal{C}_{\IIind}$ are binary. It is, therefore, important
that the \mdII{} model in addition to being exact has a stronger property:
critical points of $\mathcal{C}_{\IIind}(\boldsymbol{\xi})$ are at least in the same
connected manifolds as binary states. Thus, the critical values of
$\mathcal{C}_{\IIind}$ coincide with possible values of cuts.
\begin{theorem} \label{thm:extremal-manifolds}
  Let
  $\sset{M}\left( c \right) = \left\{ \boldsymbol{\xi} \in \mathbb{R}^N :
    \nabla_{\boldsymbol{\xi}}\mathcal{C}_{\IIind}(\boldsymbol{\xi}) = 0,
    \mathcal{C}_{\IIind}(\boldsymbol{\xi}) = c \right\}$ be the manifold of critical
  points corresponding to the same critical value $c$, then each connected
  component of $\sset{M}(c)$ contains a binary state.
\end{theorem}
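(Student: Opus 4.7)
The plan is to connect any critical point $\boldsymbol{\xi}^{*}\in\sset{M}(c)$ to a binary state via an explicit straight-line homotopy contained in $\sset{M}(c)$. The target binary state will be produced by hyperplane rounding of $\boldsymbol{\xi}^{*}$, and the key to making the homotopy work is that criticality forces every such rounding to give the same cut size $c$.

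First, for $\tau\in\mathbb{R}/4\mathbb{Z}$, I would consider the rounding $\sigma^{\tau}_{m}=+1$ if $\xi^{*}_{m}-\tau\in(-1,1)\bmod 4$ and $-1$ otherwise. The Goemans--Williamson averaging argument recalled above gives $\mathbb{E}_{\tau}[\mathcal{C}(\boldsymbol{\sigma}^{\tau})]=c$, and I would strengthen this to $\mathcal{C}(\boldsymbol{\sigma}^{\tau})\equiv c$ for every generic $\tau$: the piecewise-constant function $\tau\mapsto\mathcal{C}(\boldsymbol{\sigma}^{\tau})$ can jump only at $\tau=v_{i}\pm 1$, where $v_{i}$ runs over the distinct values of $\boldsymbol{\xi}^{*}$, and at such a jump only the cluster $S_{i}$ at $v_{i}$ and (when it exists) its antipodal partner $S_{i'}$ at $v_{i}+2$ change sign. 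A short bookkeeping argument expresses the change of the cut as a linear combination of the sums $\sum_{u\in S_{i}}\sum_{v}A_{u,v}\sgn(\xi^{*}_{u}-\xi^{*}_{v})$ and the analogous sum over $S_{i'}$; each sum equals $2\sum_{u}\dot\xi^{*}_{u}$ over the corresponding cluster and therefore vanishes at a critical point. Picking $\tau$ so that no $\xi^{*}_{m}$ lies on $\tau\pm 1$ then produces a binary configuration $\boldsymbol{\sigma}^{\tau}$ with $\mathcal{C}(\boldsymbol{\sigma}^{\tau})=c$.

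Next, I would assign each node a pole $p_{m}\in\{\tau,\tau+2\}$ corresponding to the semicircle containing $\xi^{*}_{m}$, picking the representative of $p_m$ in $\mathbb{R}$ closest to $\xi^{*}_{m}$, and form the linear path $\xi_{m}(s)=(1-s)\xi^{*}_{m}+sp_{m}$ for $s\in[0,1]$. The crux of the proof is a case analysis showing that, for every edge $(m,n)$, $\phi_{\IIind}(\xi_{m}(s)-\xi_{n}(s))$ is independent of $s\in[0,1)$: same-cluster pairs preserve difference $0$; antipodal clusters land on opposite poles and preserve difference $\pm 2$; distinct same-semicircle pairs have a difference that contracts linearly without changing sign; and distinct non-antipodal pairs in opposite semicircles have a difference whose periodic reduction to $(-2,2]$ keeps a constant sign until it reaches $\pm 2$ exactly at $s=1$. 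This invariance gives $\nabla\mathcal{C}_{\IIind}(\boldsymbol{\xi}(s))=\nabla\mathcal{C}_{\IIind}(\boldsymbol{\xi}^{*})=0$, so the path lies in $\sset{M}$; moreover $\mathcal{C}_{\IIind}$ is affine with zero gradient on the cell carrying the path and therefore constant $=c$, and $\mathcal{C}_{\IIind}(\boldsymbol{\xi}(1))=c$ follows from the identity $\mathcal{C}(\boldsymbol{\sigma}^{\tau})=c$ established in the first step.

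The main obstacle is expected to be the fourth case of the sign-invariance check. Here the raw difference $\xi_{m}(s)-\xi_{n}(s)$ in $\mathbb{R}$ can sweep by as much as a half-period, and one must verify that its reduction modulo $4$ to $(-2,2]$ never crosses $0$ or $\pm 2$ strictly before $s=1$. The generic choice of $\tau$, which avoids the finitely many values $\xi^{*}_{m}\pm 1$, forces the initial raw differences of mixed-semicircle pairs away from $\{0,\pm 2,\pm 4\}$, and a direct monotonicity check on the linear function $s\mapsto(1-s)(\xi^{*}_{m}-\xi^{*}_{n})+s(p_{m}-p_{n})$ then closes the argument.
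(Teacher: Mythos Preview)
Your contraction argument---the straight-line homotopy from $\boldsymbol{\xi}^{*}$ to its rounded binary state, with the sign of each $\phi_{\IIind}(\xi_m-\xi_n)$ preserved along the way---is exactly the idea the paper uses. The paper's execution is considerably leaner, however: it works in the coordinates $\xi_m=\sigma_m+X_m$ (modulo~$4$) with $X_m\in(-1,1]$, so that your path becomes $\boldsymbol{\sigma}+\lambda\mathbf{X}$ for $\lambda\in(0,1]$, and in these coordinates the gradient depends only on the signs $\sgn(X_m-X_n)$. Since scaling $\mathbf{X}\to\lambda\mathbf{X}$ preserves the strict ordering of the distinct $X$-values, every such sign is preserved automatically, and your four-case analysis (including the delicate ``opposite-semicircle'' case) collapses to a single observation. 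Your first step is also unnecessary: once $\mathcal{C}_{\IIind}$ is constant on $[0,1)$ by the zero-gradient argument, continuity of $\mathcal{C}_{\IIind}$ gives $\mathcal{C}_{\IIind}(\boldsymbol{\xi}(1))=c$ directly, without any appeal to rounding constancy. That constancy is in fact the content of the paper's Theorem~\ref{thm:r-invariant-rounding}, which is proved \emph{after} the present theorem rather than as a preliminary lemma.
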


Before we turn to the proof of this theorem, we introduce new dynamical
variables. Any number $\xi \in \mathbb{R}$ can be uniquely presented as
\begin{equation}\label{eq:mdII-xi-rep}
  \xi = \sigma + X + 4k,
\end{equation}
where $\sigma \in \left\{ -1, 1 \right\}$, $X \in (-1, 1]$, and
$k \in \mathbb{Z}$. The last term, representing multiples of the period of the
counting function, will play the minor role, and, therefore, we will also
write $\xi = \sigma + X \mod P$, which should be understood in the sense of
Eq.~\eqref{eq:mdII-xi-rep}. Because of the importance of this
representation, we will call the pair $(\sigma, X)$ \emph{a relaxed spin} and
refer to $\sigma$ and $X$ as its discrete and continuous components,
respectively.

Using~\eqref{eq:mdII-xi-rep} in
Eq.~\eqref{eq:mdII-objective}, we can write the relaxed spin representation
for the objective function
\begin{equation}\label{eq:mdII-cut-sx-rep}
  \mathcal{C}_{\IIind} \left( \boldsymbol{\xi} \right) =
  \mathcal{C}_{\IIind} \left( \boldsymbol{\sigma}, \mathbf{X} \right) :=
  \mathcal{C}\left( \boldsymbol{\sigma} \right)
  + \widetilde{\mathcal{C}}_{\IIind}\left( \boldsymbol{\sigma}, \mathbf{X} \right) ,
\end{equation}
where
\begin{equation}\label{eq:mdII-cut-add}
  \widetilde{\mathcal{C}}_{\IIind}\left( \boldsymbol{\sigma}, \mathbf{X} \right)
  = \frac{1}{4}\sum_{m,n} A_{m,n} \sigma_m \sigma_n \abs{X_m - X_n}.
\end{equation}
Equation~\eqref{eq:mdII-cut-sx-rep} can be derived by noticing that
\begin{equation}\label{eq:mdII-Phi-diff}
  \Phi_{\IIind} (X_m - X_n + \sigma_{m} - \sigma_{n}) = \frac{1}{2}
  \left(1 - \sigma_m \sigma_n\right) + \sigma_m \sigma_n \Phi_{\IIind}\left( X_m - X_n \right) .
\end{equation}
This equality obviously holds when $\sigma_m = \sigma_n$. In turn, when
$\sigma_m - \sigma_n = \pm2$, the equality follows from the symmetry of the counting
function $\Phi_{\IIind}(x \pm 2) = 1 - \Phi_{\IIind}(x)$. It is worth noting that the
counting functions $\Phi_{\text{SDP}}$ and $\Phi_{\text{Tr}}$ also have such a symmetry and,
therefore, representations similar to~\eqref{eq:mdII-cut-sx-rep} can be
obtained for rank-$2$ SDP and triangular model as well. Finally,
Eq.~\eqref{eq:mdII-cut-add} is written considering that
$\Phi_{\IIind} \left( X_m - X_n \right) = \abs{X_m - X_n}/2$.

The equations of motion describing the dynamics of the \mdII-machine in
terms of the new variables have the form
\begin{equation}\label{eq:mdII-eqmo-sx}
  \dot{X}_m = \frac{1}{2}\sum_{n} A_{m,n} \sigma_m \sigma_n \sgn\left( X_m - X_n \right) .
\end{equation}
They should be solved while taking into account the topology of the new
variables (see Fig.~\ref{fig:sx-topology}). For example, the numerical
simulations presented in the next section updated the dynamical variables
as described in procedure \textproc{Update} listed in
Section~\ref{sec:numerics}.

\begin{figure}[tb]
  \centering
  \includegraphics[width=1.5in]{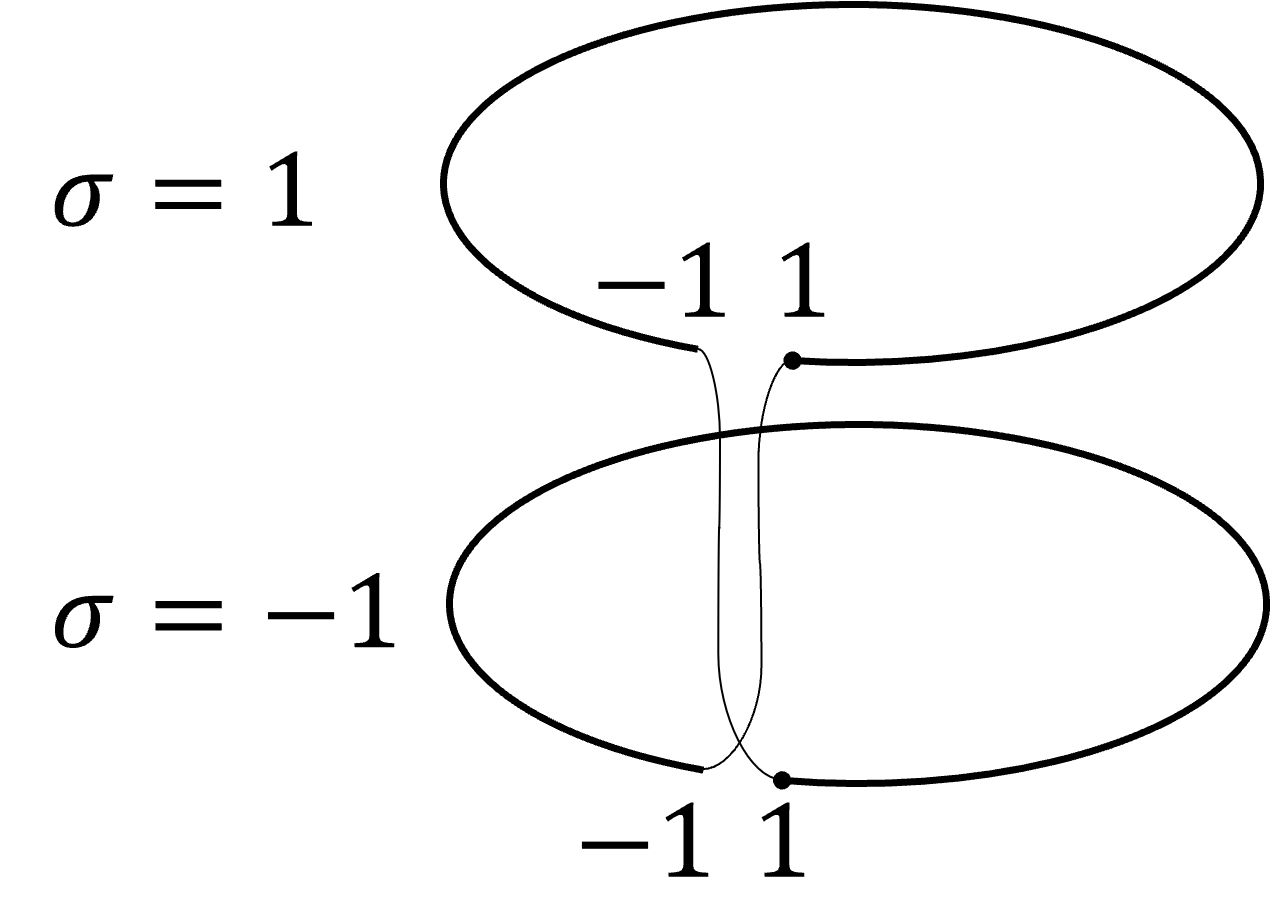}
  \caption{The topology of the representation $\xi = \sigma + X \mod P $. The bold
  circles represent intervals $(-1,1]$, and the thin lines indicate the
  transitions at the boundaries of these intervals.}
  \label{fig:sx-topology}
\end{figure}

In the new variables, it is apparent  that the lack of isolated non-binary
critical points is the consequence of the local linearity of
$\mathcal{C}_{\IIind}(\boldsymbol{\sigma}, \mathbf{X})$.

\begin{proof}[Proof of Theorem~\ref{thm:extremal-manifolds}]
  Let $\boldsymbol{\xi}$ be a critical point of $\mathcal{C}_{\IIind}(\boldsymbol{\xi})$
  that is not a displacement of a binary state. In other words, one has
  $\xi_m = \sigma_m + X_m \mod P$ and not all $X_m$ are the same (without loss of
  generality, we
  can assume that $X_m \ne 1$ for all $m$). We
  show that $\boldsymbol{\xi}$ can be contracted to a binary state while
  staying on the critical manifold. 

  We partition $\sset{V}$ by collecting nodes with the same $X_m$, so that
  $\sset{V} = \bigcup_{p=1}^S\sset{V}^{(p)}$, with $S \leq N$, where
  $\sset{V}^{{(p)}} = \left\lbrace m \in \sset{V} : \xi_{m} = X^{(p)} + \sigma_{m} \mod P
  \right\rbrace$ and $X^{(p)} \ne X^{(q)}$ for $p \ne q$.
  Enumerating $\sset{V}^{(p)}$ in such a way that $X^{(p)} > X^{(q)}$ for
  $p > q$, we can rewrite Eq.~\eqref{eq:mdII-cut-sx-rep} as
  \begin{equation}\label{eq:mdII-partition-xp}
    \mathcal{C}_{\IIind}(\boldsymbol{\xi}) = \mathcal{C}(\boldsymbol{\sigma})
    + \frac{1}{2} \sum_{p > q}
    \sum_{\substack{m \in \sset{V}^{(p)} \\ n \in \sset{V}^{(q)} }}
    A_{m,n} \sigma_m \sigma_n \left( X^{(p)} - X^{(q)}\right).
  \end{equation}
  With respect to each $X^{(p)}$, this is a linear function, hence, its
  gradient does not depend on the magnitude of $\mathbf{X}$. Let $\left\{ \lambda_1, \lambda_2 \ldots
  \right\}$ with $0 < \lambda_j < 1$ be a Cauchy sequence converging to 0. Since
  $\lambda_j X^{(p)} > \lambda_j X^{(q)}$ for $p > q$, the sequence
  $\boldsymbol{\xi}_j= \boldsymbol{\sigma} + \lambda_j\mathbf{X}$ is a Cauchy sequence
  of critical points of $\mathcal{C}_{\IIind}$ converging to $\boldsymbol{\sigma}$.
\end{proof}

The linear form of Eq.~\eqref{eq:mdII-partition-xp} results in a
clustered form of the terminal states of the \mdII-machine. Let
$\boldsymbol{\xi} = \boldsymbol{\sigma} + \mathbf{X} \mod P$ be
a critical point of $\mathcal{C}_{\IIind}(\boldsymbol{\xi})$. Similar to above, we
introduce partitioning of the graph nodes
$\sset{V} = \bigcup_{p = 1}^S \sset{V}^{(p)}$, where $\sset{V}^{(p)}$ is a subset
of nodes with coinciding $X$-coordinates, $X^{(p)}$, so that if node $m$
belongs to some $\sset{V}^{(p)}$, then
$\xi_m = \sigma_m + X^{(p)} \mod P$. When the number of parts is
strictly less than the number of nodes, $S < N$, some $\sset{V}^{(p)}$
contain more than one node. As we will see shortly, terminal states with
$S < N$ are rather common. To reflect this observation, we will call
$\sset{V}^{(p)}$ \emph{clusters}, even if $\abs{\sset{V}^{(p)}} = 1$ for
all $p$, in which case the clusters can be regarded as \emph{trivial}. The
notion of clusters is illustrated by Fig.~\ref{fig:clusters} that depicts a
``generic'' critical point $\boldsymbol{\xi}$ on the circle with
circumference $P$.

\begin{figure}[tb]
  \centering
  \includegraphics[width=0.7\textwidth]{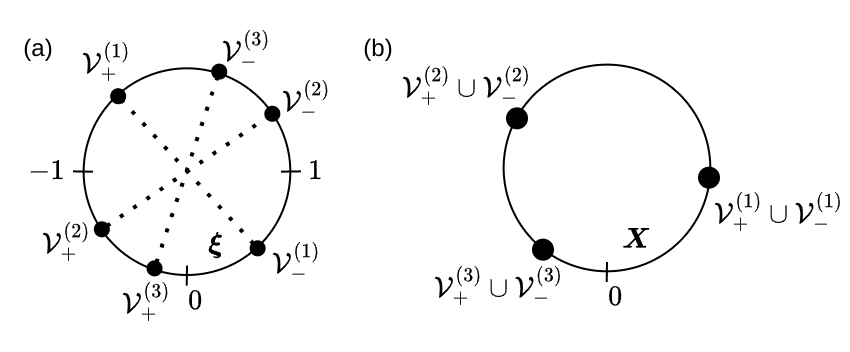}
  \caption{The clustered structure of critical points of
    $\mathcal{C}_{\IIind}$. (a) A typical form of
    $\boldsymbol{\xi}$ with bold points indicating $\xi_n$ at nodes in sets
    forming a partition
    $\sset{V} = \sset{V}^{(1)} \cup \sset{V}^{(2)} \cup \sset{V}^{(3)}$. Each
    cluster is made of two components,
    $\sset{V}^{(p)} = \sset{V}^{(p)}_- \cup \sset{V}^{(p)}_+$, 
    corresponding to the respective values of $\sigma_m$. (b) The same state
    shown for the $(\boldsymbol{\sigma}, \mathbf{X})$ representation (relaxed spins).}
  \label{fig:clusters}
\end{figure}

Details of the distribution of the nodes over the clusters and the spin
states within the clusters non-trivially depend on the structure of the
graph weighted adjacency matrix. A comprehensive consideration of these
properties goes beyond the scope of the present paper, and, therefore, we
limit ourselves to several simple examples. Let $\boldsymbol{\xi}$ be a
critical point with clusters $\sset{V}^{(p)}$ enumerated in such a way that
$X^{(p)} > X^{(q)}$ for $p > q$. Then the equilibrium condition for node $m
\in \sset{V}^{(p)}$ can be written as
\begin{equation}\label{eq:cluster_p_equilibrium}
  \sum_{q < p} \sum_{n \in \sset{V}^{(p)}} A_{m,n} \sigma_n - 
  \sum_{q > p} \sum_{n \in \sset{V}^{(p)}} A_{m,n} \sigma_n = 0.
\end{equation}
Writing this condition for all graph nodes, we obtain a system of
homogeneous equations with respect to spin variables $\sigma_n$. This system has
feasible solutions ($\sigma_m \in \left\{ -1, 1 \right\}$) only if each row
of the graph weighted adjacency matrix has elements that are
linearly-independent over $\left\{ - 1, 1 \right\}$. This implies that for
generic adjacency matrices with real matrix elements, terminal states can
consist of only one cluster, when the left-hand-side of
Eq.~\eqref{eq:cluster_p_equilibrium} vanishes identically. In other words,
all terminal states have the form of a homogeneously displaced binary state
with $\xi_m = \sigma_m + X \mod P$. This justifies the remark above
about the clustered form of terminal states being common.
Figure~\ref{fig:clusters-dyn} demonstrates the emergence of clusters for
the example of complete graph $\sset{K}_{12}$. More generally, as will
become apparent from the proof of Theorem~\ref{thm:r-invariant-rounding}
below the number of clusters in the terminal state cannot exceed the number
of binary states in the containing it critical manifold.


\begin{figure}[tb]
  \centering
  \includegraphics[width=0.7\textwidth]{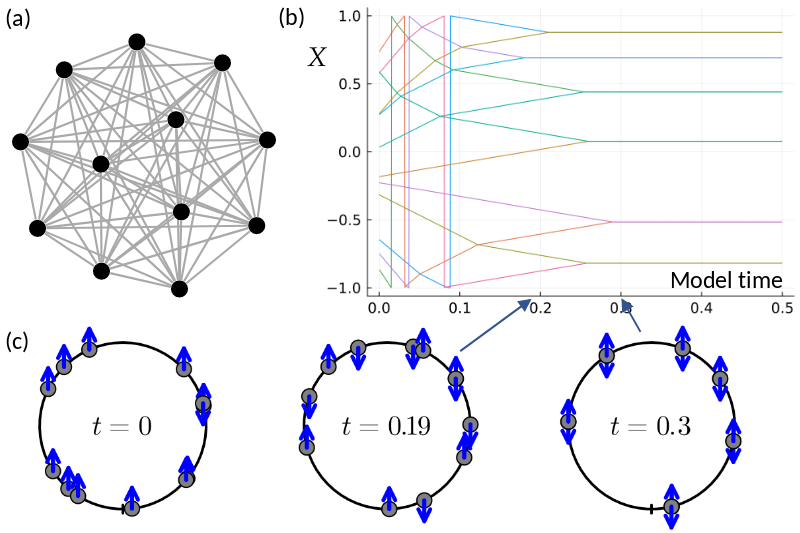}
  \caption{The emergence of clusters in terminal states of the \mdII{}
    model. (a) Graph $\sset{K}_{12}$ used as an example. (b) The time
    dependence of the $X$-component of the relaxed spins. The arrows
    indicate the instances of snapshots in (c). The time is measured in
    units of Eq.~\eqref{eq:mdII-eqmo-sx}. (c) Snapshots of the time
    evolution of the relaxed spins on $\sset{K}_{12}$. }
  \label{fig:clusters-dyn}
\end{figure}

Most of the critical points of $\mathcal{C}_{\IIind}$ are saddle points. In their
vicinity, the shape of $\mathcal{C}_{\IIind}$ is determined by the homogeneity of
$\widetilde{\mathcal{C}}_{\IIind}(\boldsymbol{\sigma}, \mathbf{X})$:
\begin{equation}\label{eq:mdII-scaling-property}
  \mathcal{C}_{\IIind}(\boldsymbol{\sigma}, \lambda \mathbf{X}) = \mathcal{C}(\boldsymbol{\sigma}) +
  \abs{\lambda} \widetilde{\mathcal{C}}_{\IIind}(\boldsymbol{\sigma}, \mathbf{X}).
\end{equation}
It should be noted that while this expression is defined for
$\norm{\mathbf{X}}_\infty \leq 1$ and
$\norm{\lambda\mathbf{X}}_\infty \leq 1$, it formally holds for
$\abs{\lambda} \leq \min_{(m,n) \in \sset{E}}2/\abs{X_m - X_n}$. For example, let
$\boldsymbol{\delta}(\boldsymbol{\sigma}' \vert \boldsymbol{\sigma})$ be a
$(0,1)$-vector pointing from $\boldsymbol{\sigma}$ to $\boldsymbol{\sigma}'$ in the
$\boldsymbol{\xi}$-space, that is
$\delta(\boldsymbol{\sigma}' \vert \boldsymbol{\sigma})_m = 0$, if
$\sigma_m' = \sigma_m$, and
$\delta(\boldsymbol{\sigma}' \vert \boldsymbol{\sigma})_m = 1$, if
$\sigma_m' = -\sigma_m$. Then,
$\boldsymbol{\sigma} + 2 \boldsymbol{\delta}(\boldsymbol{\sigma}' \vert \boldsymbol{\sigma}) =
\boldsymbol{\sigma}' \mod P$. Using this relation
in~\eqref{eq:mdII-scaling-property}, we obtain the correct
\begin{equation}\label{eq:mdII-cut-variation}
  \widetilde{\mathcal{C}}_{\IIind}\left(\boldsymbol{\sigma},
    \boldsymbol{\delta}(\boldsymbol{\sigma}'\vert \boldsymbol{\sigma})  \right)
  = \frac{1}{2} \left[ \mathcal{C}(\boldsymbol{\sigma}') - \mathcal{C}(\boldsymbol{\sigma}) \right].
\end{equation}
This relation can be proven directly by noticing that, in terms of
$\mathbf{X}(\boldsymbol{\sigma}' \vert \boldsymbol{\sigma})$, the relation between
$\boldsymbol{\sigma}'$ and $\boldsymbol{\sigma}$ can be written as
\begin{equation}\label{eq:mdII-sp-s-relation}
  \sigma_m' = \sigma_m \left( 1 - 2 \delta(\boldsymbol{\sigma}' \vert \boldsymbol{\sigma})_m \right) .
\end{equation}
Then, we have the following chain of equalities, where we have omitted the
argument of $\boldsymbol{\delta}(\boldsymbol{\sigma}' \vert \boldsymbol{\sigma})$,
\begin{equation}\label{eq:dyn-mdII-varcuts}
  \begin{split}
    \mathcal{C}(\boldsymbol{\sigma}') - \mathcal{C}(\boldsymbol{\sigma}) = &
    \frac{1}{4} \sum_{m,n} A_{m,n} \left( \sigma_m \sigma_n -
      \sigma'_m \sigma'_n \right) \\
    & = \frac{1}{2} \sum_{m,n} A_{m,n} \sigma_m \sigma_n
    \left(\delta_m + \delta_n - 2 \delta_m\delta_n \right) \\
    & = \frac{1}{2} \sum_{m,n} A_{m,n} \sigma_m \sigma_n
      \left( \delta_m - \delta_n \right)^2 \\
    & = \frac{1}{2} \sum_{m,n} A_{m,n} \sigma_m \sigma_n \abs{\delta_m - \delta_n} \\
    & = 2 \widetilde{\mathcal{C}}_{\IIind}
      \left( \boldsymbol{\sigma}, \boldsymbol{\delta}(\boldsymbol{\sigma}' \vert \boldsymbol{\sigma}) \right),
  \end{split}
\end{equation}
where we have taken into account that for $(0,1)$-vectors ${\delta_m}^2 = \delta_m$.

It follows from Eq.~\eqref{eq:mdII-eqmo-sx} that for any $\mathbf{Y} \in \mathbb{R}^N$ such
that $\norm{\mathbf{Y}}_\infty \leq 1$ and $\lambda \in \mathbb{R}$ such that $\norm{\lambda \mathbf{Y}}_\infty
\leq 1$, one has
\begin{equation}\label{eq:mdII-hom-grad}
  \left. \nabla_{\mathbf{X}} \widetilde{\mathcal{C}}_{\IIind} \left( \boldsymbol{\sigma},
      \mathbf{X} \right)\right\vert_{\mathbf{X} = \lambda \mathbf{Y}}
  =
  \sgn(\lambda)
  \left.\nabla_{\mathbf{X}} \widetilde{\mathcal{C}}_{\IIind} \left( \boldsymbol{\sigma},
      \mathbf{X} \right)\right\vert_{\mathbf{X} = \mathbf{Y}} .
\end{equation}
However, since $\nabla_{\mathbf{X}} \mathcal{C}_{\IIind}$ is discontinuous, this does not
imply that points $\boldsymbol{\xi}(t) = \boldsymbol{\sigma} + t
\boldsymbol{\delta}\left( \sigma' \vert \boldsymbol{\sigma} \right) $ with $0 < t < 2$, that
is lying on the segment connecting $\boldsymbol{\sigma}$ and $\boldsymbol{\sigma}'$,
are critical. For that, we need a more detailed analysis of the internal
structure of the critical points.

The immediate consequence of Eqs.~\eqref{eq:mdII-scaling-property}
and~\eqref{eq:mdII-cut-variation} is
\begin{theorem}\label{thm:max_min_cut}
  All binary states, except for max-cut and $\pm \boldsymbol{1}$, are saddle
  points of $\mathcal{C}_{\IIind}(\boldsymbol{\xi})$.
\end{theorem}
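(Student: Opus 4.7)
The plan is to exploit the scaling identity~\eqref{eq:mdII-scaling-property} together with~\eqref{eq:mdII-cut-variation} to exhibit, at any binary $\boldsymbol{\sigma}$ that is neither a max-cut nor $\pm\boldsymbol{1}$, both a strictly ascending and a strictly descending direction of $\mathcal{C}_{\IIind}$. Since Theorem~\ref{thm:extremal-manifolds} and the preceding remarks already place such a $\boldsymbol{\sigma}$ on a critical manifold, producing such a pair of directions suffices to conclude that $\boldsymbol{\sigma}$ is not a local extremum and is therefore a saddle.

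Concretely, for any other binary configuration $\boldsymbol{\sigma}'$ and any sufficiently small real $\lambda$, I would apply~\eqref{eq:mdII-scaling-property} with $\mathbf{X} = \boldsymbol{\delta}(\boldsymbol{\sigma}' \vert \boldsymbol{\sigma})$ and substitute~\eqref{eq:mdII-cut-variation} to obtain
\begin{equation*}
\mathcal{C}_{\IIind}\bigl(\boldsymbol{\sigma}, \lambda \boldsymbol{\delta}(\boldsymbol{\sigma}' \vert \boldsymbol{\sigma})\bigr) - \mathcal{C}(\boldsymbol{\sigma}) = \frac{|\lambda|}{2}\bigl(\mathcal{C}(\boldsymbol{\sigma}') - \mathcal{C}(\boldsymbol{\sigma})\bigr).
\end{equation*}
Thus the sign of the local variation of $\mathcal{C}_{\IIind}$ along the $\boldsymbol{\delta}$-ray is precisely the sign of $\mathcal{C}(\boldsymbol{\sigma}') - \mathcal{C}(\boldsymbol{\sigma})$, and, notably, this sign is the same for both positive and negative $\lambda$. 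The hypothesis that $\boldsymbol{\sigma}$ is not a max-cut then supplies some $\boldsymbol{\sigma}'$ with $\mathcal{C}(\boldsymbol{\sigma}') > \mathcal{C}(\boldsymbol{\sigma})$, yielding an ascending direction. Conversely, since $\sset{G}$ is connected, every partition other than $\pm\boldsymbol{1}$ cuts at least one edge, so $\mathcal{C}(\boldsymbol{\sigma}) > 0 = \mathcal{C}(\pm\boldsymbol{1})$; taking $\boldsymbol{\sigma}' = \boldsymbol{1}$ (or $-\boldsymbol{1}$) then delivers a descending direction.

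The main obstacle I anticipate is largely bookkeeping: checking that the chosen $\boldsymbol{\delta}$-direction lies in the admissibility window $\|\lambda \mathbf{X}\|_\infty \leq 1$ required just below~\eqref{eq:mdII-scaling-property} (which is automatic for $|\lambda| \leq 1$, because $\boldsymbol{\delta}$ is a $(0,1)$-vector), and clarifying that the non-smoothness of $\mathcal{C}_{\IIind}$ at $\boldsymbol{\sigma}$ does not interfere with the saddle interpretation. Since $\mathcal{C}_{\IIind}$ is locally a sum of absolute values and its one-sided variation along each $\boldsymbol{\delta}$-ray is strictly signed, the standard characterization of a saddle point as a non-extremal critical point applies, and the theorem reduces to the two strict inequalities above.
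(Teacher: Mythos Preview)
Your proposal is correct and follows essentially the same route as the paper: use the scaling identity~\eqref{eq:mdII-scaling-property} combined with~\eqref{eq:mdII-cut-variation} to show that along the ray $\boldsymbol{\sigma} + t\,\boldsymbol{\delta}(\boldsymbol{\sigma}'\vert\boldsymbol{\sigma})$ the objective increases or decreases according to the sign of $\mathcal{C}(\boldsymbol{\sigma}') - \mathcal{C}(\boldsymbol{\sigma})$, and then pick $\boldsymbol{\sigma}'$ on each side of $\mathcal{C}(\boldsymbol{\sigma})$. The only cosmetic difference is that the paper leaves the choice of the lower-cut state $\boldsymbol{\sigma}_-$ abstract, whereas you explicitly take $\boldsymbol{\sigma}' = \pm\boldsymbol{1}$ and invoke connectedness of $\sset{G}$ to guarantee $\mathcal{C}(\boldsymbol{\sigma}) > 0$; your added remarks on the admissibility window and non-smoothness are sound and slightly more explicit than the paper's version.
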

\begin{proof}
  Let $\boldsymbol{\sigma}$ be a state with
  $0 < \mathcal{C}(\boldsymbol{\sigma}) < \overline{\mathcal{C}}_{\sset{G}}$, then there exist states
  $\boldsymbol{\sigma}_\pm$ such that
  $\mathcal{C}(\boldsymbol{\sigma}_-) < \mathcal{C}(\boldsymbol{\sigma})$ and
  $\mathcal{C}(\boldsymbol{\sigma}_+) > \mathcal{C}(\boldsymbol{\sigma})$. Functions
  $C_\pm(t) = \mathcal{C}_{\IIind}(\boldsymbol{\sigma} + t \boldsymbol{\delta}(\boldsymbol{\sigma}_\pm \vert
  \boldsymbol{\sigma}))$ are defined on the interval $-2 < t < 2$ and
  $t = 0$ is the local maximum of $C_-(t)$ and the local minimum of
  $C_+(t)$. Hence, $\boldsymbol{\sigma}$ is a saddle point of
  $\mathcal{C}_{\IIind}(\boldsymbol{\xi})$.

  A similar argument shows that states $\pm \boldsymbol{1}$ are
  the only minima and the max-cut states are the only maxima of
  $\mathcal{C}_{\IIind}(\boldsymbol{\xi})$.
\end{proof}

\subsection{Invariance of the cut size with respect to rounding}

The critical points of $\mathcal{C}(\boldsymbol{\xi})$ do not have to be binary, or, in other
words, the critical manifold do not have to be the union of isolated
points. The presence of non-binary states raises the question about their rounding.
As we will show, this question resolves trivially for the \mdII\ model
written in terms of variables $\boldsymbol{\sigma}$ and $\mathbf{X}$: simple
discarding the continuous component $\mathbf{X}$ yields the best rounded
state.

We start by noticing that Eq.~\eqref{eq:mdII-xi-rep} defines mapping
$\widehat{S} : \boldsymbol{\xi} \mapsto (\boldsymbol{\sigma}, \mathbf{X})$, which
implements rounding by fixing the reference point for $\boldsymbol{\xi}$. Due
to the translational symmetry of $\mathcal{C}_{\IIind}(\boldsymbol{\xi})$, the
reference point can be freely changed leading to a family of mappings
$\widehat{S}_r : \boldsymbol{\xi} \mapsto (\boldsymbol{\sigma}(r), \mathbf{X}(r)) $
defined by
\begin{equation}\label{eq:mdII-r-mappings}
  \xi_m - r = \sigma_m(r) + X_m(r) \mod P.
\end{equation}
Using this relation, we define rounding of an arbitrary state
$\boldsymbol{\xi}$ with respect to the rounding center $r$ as
$\widehat{R}_r : \boldsymbol{\xi} \mapsto \boldsymbol{\sigma}(r)$. It suffices to
consider $0 \leq r < 2$, since
$\boldsymbol{\sigma} + 2 = - \boldsymbol{\sigma} \mod P$ and, therefore,
$\boldsymbol{\sigma}(r + 2) = -\boldsymbol{\sigma}(r)$.

For $0 < r < 1 + \min_m X_m$, the variation of $r$ leaves both terms in
Eq.~\eqref{eq:mdII-cut-sx-rep} intact. For larger values of $r$, however,
some spins in $\boldsymbol{\sigma}(r)$ are reversed comparing to
$\boldsymbol{\sigma}(0)$ and, generally, for an arbitrary non-binary state
$\mathcal{C}(\boldsymbol{\sigma}(r)) \ne \mathcal{C}(\boldsymbol{\sigma}(0))$. It is, therefore, an
important property of the \mdII{} model that the size of the cut, produced by
rounding a non-binary \emph{critical} point, does not depend on $r$.

\begin{theorem}
  \label{thm:r-invariant-rounding}
  Let $c$ be a critical value of $\mathcal{C}_{\IIind}$, $\sset{M}(c)$ be the manifold of
  critical points yielding $c$, and
  $\sset{S}(c) = \left\{ \boldsymbol{\sigma} \in \left\{ -1, 1 \right\}^N :
    \mathcal{C}(\boldsymbol{\sigma}) = c\right\}$ be the set of binary states in
  $\sset{M}(c)$. Then, for all $\boldsymbol{\xi} \in \sset{M}(c)$, one has
  $\widehat{R}_r [\boldsymbol{\xi}] \in \sset{S}(c)$. In other words, 
  rounding a non-binary critical point with respect to different rounding centers
  produces binary states yielding the same cut.
\end{theorem}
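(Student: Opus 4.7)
The plan is to show that $C(r) := \mathcal{C}(\widehat{R}_r[\boldsymbol{\xi}])$ is constant in $r$ and equal to $c$. Since $\boldsymbol{\sigma}(r+2) = -\boldsymbol{\sigma}(r)$ leaves cuts invariant, it suffices to treat $r \in [0,2)$. On this interval, $\boldsymbol{\sigma}(r)$ is piecewise constant, changing only at the finite set of flip points $\{\xi_m \bmod 2 : m \in \sset{V}\}$. So I need to (i) fix an anchor value $C(0) = c$, and (ii) show that $C$ does not jump at any flip point.

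For (i), I would invoke the construction in the proof of Theorem~\ref{thm:extremal-manifolds}: the sequence $\boldsymbol{\xi}_j = \boldsymbol{\sigma}(0) + \lambda_j \mathbf{X}(0)$ lies in the critical manifold at level $c$ and converges to the binary state $\boldsymbol{\sigma}(0)$. By continuity of $\mathcal{C}_{\IIind}$ together with the fact that $\widetilde{\mathcal{C}}_{\IIind}$ vanishes when $\mathbf{X} = \mathbf{0}$, one has $\mathcal{C}(\boldsymbol{\sigma}(0)) = \mathcal{C}_{\IIind}(\boldsymbol{\sigma}(0)) = c$.

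For (ii), fix a flip point $r^*$ and let $\sset{F} = \sset{F}(r^*)$ be the set of nodes whose spin flips there. Applying the chain~\eqref{eq:dyn-mdII-varcuts} to the $\{0,1\}$-indicator vector $\boldsymbol{\delta}$ of $\sset{F}$, the cut jump across $r^*$ reduces, after using the symmetry of $A$, to $\sum_{m \in \sset{F},\, n \notin \sset{F}} A_{m,n} \sigma_m \sigma_n$, with $\sigma$'s taken just below $r^*$. The key observation is that, in the interval immediately preceding $r^*$, the $X$-coordinates of all $m \in \sset{F}$ coincide (they approach the boundary $-1$ along a single trajectory) while $X_n$ for $n \notin \sset{F}$ stays strictly bounded above $-1$. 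Hence, for any $m \in \sset{F}$, $\sgn(X_m - X_n) = -1$ for $n \notin \sset{F}$ and $\sgn(X_m - X_{m'}) = 0$ (by the convention $\sgn(0)=0$) for $m' \in \sset{F}$. Substituting these values into the criticality condition~\eqref{eq:mdII-eqmo-sx} with $\dot{X}_m = 0$ collapses it to $\sum_{n \notin \sset{F}} A_{m,n} \sigma_m \sigma_n = 0$ for each $m \in \sset{F}$. Summing over $m \in \sset{F}$ kills the jump, establishing $C(r^*+0) = C(r^*-0)$.

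The main obstacle is verifying that the $X$-coordinates of the simultaneously flipping nodes really do coincide just before $r^*$. This requires a short case analysis in the representation $\xi_m = \sigma_m + X_m \bmod P$: two nodes in $\sset{F}$ may both approach the boundary $y = 0$ (in which case $\sigma_m = \sigma_{m'} = +1$ immediately before), both approach $y = -2$ (both $\sigma = -1$), or hit opposite boundaries (opposite pre-flip signs). The mixed case is the one requiring care, because the coincidence $X_m = X_{m'}$ arises from a cancellation between the $y$-offsets and the differing $\sigma$-shifts; once this is tracked modulo $P = 4$, the coincidence holds in all three cases and the criticality-based cancellation in the previous paragraph goes through independently of which edges of $\sset{G}$ lie inside or cross $\sset{F}$.
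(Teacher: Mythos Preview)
Your proposal is correct and follows essentially the same line as the paper's proof: anchor $C(0)=c$ via the contraction of $\boldsymbol{\xi}$ to $\boldsymbol{\sigma}(0)$ (as in Theorem~\ref{thm:extremal-manifolds}), then show the cut is unchanged at each flip point by reading off from criticality that the signed edge-sum across the flipping block vanishes. The paper streamlines your final paragraph by partitioning the nodes directly by their common $X$-value---so the ``mixed case'' never needs a separate check---and uses the collective directional derivative $\partial \mathcal{C}_{\IIind}/\partial X^{(p)}=0$ in place of summing per-node conditions, but the underlying argument is the same.
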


\begin{proof}
Let $\left\{ \sset{V}^{(p)} \right\}_{p = 1, \ldots, S} $ be a partition of
graph nodes $\sset{V}$ such that
$\sset{V}^{(p)} = \left\{ m \in \sset{V} : \xi_m = X^{(p)} + \sigma_m \mod P
\right\}$ with $X^{(p)} > X^{(q)}$ for $p > q$.

Since $\boldsymbol{\xi}$ is a critical point,
$\partial \mathcal{C}_{\IIind}(\boldsymbol{\xi})/\partial X^{(p)} = 0$ (this is a necessary condition
of criticality but not sufficient). Hence, $\mathcal{C}_{\IIind}$ is invariant with
respect to contracting $\boldsymbol{\xi}$ to $\boldsymbol{\sigma}$:
$\mathcal{C}_{\IIind}(\boldsymbol{\xi}) = \mathcal{C}_{\IIind}(\boldsymbol{\sigma}, \mathbf{X}) =
\mathcal{C}_{\IIind}(\boldsymbol{\sigma}, \lambda\mathbf{X})$ for
$0 < \lambda < 1$. Thus, $\boldsymbol{\sigma} =: \boldsymbol{\sigma}(0) \in \sset{S}(c)$.

For $0 < r < r_1$, where $r_1 = 1 + X^{(1)}$, one has
$\boldsymbol{\sigma}(r) = \boldsymbol{\sigma}(0)$ and
$X^{(p)}(r) = X^{(p)} - r$. At $r = r_1$, spins at nodes in
$\sset{V}^{(1)}$ are inverted: $\sigma_m(r_1) = -\sigma_m(0)$ for all
$m \in \sset{V}^{(1)}$. This changes the rounded state, but the cut produced
by the new state is the same.

Indeed, expanding $\partial \mathcal{C}_{\IIind}(\boldsymbol{\xi})/\partial
X^{(1)} = 0$ (see Eq.~\eqref{eq:mdII-partition-xp}) produces
\begin{equation}\label{eq:mdII-eq-v1}
- \sum_{m \in \sset{V}^{(1)}} \sum_{n \notin \sset{V}^{(1)}} A_{m,n} \sigma_m \sigma_n = 0.
\end{equation}
This relation is invariant with respect to transformation
$\sigma_m \to -\sigma_m$ for all $m \in \sset{V}^{(1)}$. Thus, inverting spins in
$\sset{V}^{(1)}$ does not change the total cut.

For $r_1 < r < r_2 := 1 + X^{(2)}$, we have $\boldsymbol{\sigma}(r) =
\boldsymbol{\sigma}(r_1)$ and $X^{(2)}(r) < \ldots < X^{(S)}(r) < X^{(1)}(r)$. At $r
= r_2$, spins in $\sset{V}^{(2)}$ are inversed. The same argument as above
shows that this inversion does not change the cut size.

This process, increasing $r$ and inverting spins in $\sset{V}^{(i)}$ at
$r = r_i := 1 + X^{(i)}$, continues until $r$ reaches $r_S$. At this point,
one obtains $\boldsymbol{\sigma}(r_S) = - \boldsymbol{\sigma}(0)$ and the same mutual
relations between $X^{(p)} \left(r_S \right)$ as between
$X^{(p)} \left(0\right)$. Thus, while increasing $r$ (not necessarily
restricted to the interval $[0,2]$), rounding
$\widehat{R}_r[\boldsymbol{\xi}]$ produces the periodic sequence of binary
states
\begin{equation*}
  \left\{ \boldsymbol{\sigma}(0), \boldsymbol{\sigma}(r_1), \ldots,
  \boldsymbol{\sigma}(r_{S-1}), - \boldsymbol{\sigma}(0), -\boldsymbol{\sigma}(r_1), \ldots,
  -\boldsymbol{\sigma}(r_{S-1}), \boldsymbol{\sigma}(0), \ldots \right\}.
\end{equation*}
All these
states define cuts of the same size and, therefore, they are in $\sset{S}(c)$.

To complete the proof, one needs to show that one does not need to increase
the rounding center gradually and that $\widehat{R}_{r}[\boldsymbol{\xi}] =
\boldsymbol{\sigma}(r_i)$ for $r_i \leq r < r_{i+1}$. Indeed, in this case,
$\boldsymbol{\sigma}(r)$ is obtained from $\boldsymbol{\sigma}(0)$ by inverting spins
in $\sset{V}^{(1,i)} = \bigcup_{p \leq i} \sset{V}^{(i)}$, which yields
$\boldsymbol{\sigma}(r_i)$ since $\sset{V}^{(p)}$ are mutually disjoint.
\end{proof}

It follows from Theorem~\ref{thm:r-invariant-rounding} that for a critical
point $\boldsymbol{\xi}$ of $\mathcal{C}_{\IIind}$, the notion of cut size is correctly
defined even if $\boldsymbol{\xi}$ is not binary:
$\mathcal{C}(\boldsymbol{\xi}) := \mathcal{C} \left( \widehat{R}[\boldsymbol{\xi}] \right)$, where
$\widehat{R}[\boldsymbol{\xi}]$ is an arbitrary rounding.

The proof of Theorem~\ref{thm:r-invariant-rounding} shows that a
terminal state of the \mdII{} model with $S$ clusters generates $S$
different binary configurations corresponding to the same critical value of
$\mathcal{C}_{\IIind}(\boldsymbol{\sigma}, \mathbf{X})$. Thus, we obtain an important result
\begin{corollary} \label{cor:num-clusters} The number of clusters in a
  terminal state of the \mdII{} model $\boldsymbol{\xi}(\infty)$ does not exceed
  the number of binary states in the critical manifold $\sset{M}(c)$
  containing $\boldsymbol{\xi}(\infty)$.
\end{corollary}

The internal structure of terminal states revealed in the proof of
Theorem~\ref{thm:r-invariant-rounding} is illustrated by the results of
evolution of the \mdII{} model on $\sset{K}_{12}$ shown in
Fig.~\ref{fig:clusters-dyn}. The terminal state in this case has the form
of a family of clusters made of pairs of spins with opposite signs. It can
be shown that such a form is typical for terminal states fo the \mdII{}
model on complete graphs with even number of nodes starting from random
initial states. Inverting spins in the clusters starting from those with
the smallest $X$-component produces the series of different spin
configurations yielding the same value of cut, in this case, $36$, maximum
cut of $\sset{K}_{12}$. From the perspective of partitioning the graph
nodes, each new configuration corresponds to swapping two nodes from
different partitions. Obviously, for a maximum cut partition of a complete
graph, such swap does not change the number of cut edges. Notably, the
sequence of such swaps, or the family of multiple solutions, is obtained in
a single run of the \mdII-machine.



We conclude by noting that Theorem~\ref{thm:r-invariant-rounding} is the most
sensitive to regularizations of $\mathcal{C}_{\IIind}(\boldsymbol{\xi})$ as mapping
$\widehat{R}_{r}[\boldsymbol{\xi}]$ may not be well-defined for $r = X_{p}$.
In this case, for the regularized objective function,
Theorem~\ref{thm:r-invariant-rounding} holds for almost all points in
critical manifold $\sset{M}(c)$.

\subsection{Non-decreasing cuts and optimal rounding}

As discussed above, rounding of non-critical states produces cuts
of size depending on the choice of the rounding center. For
instance, this is the typical situation for equilibrium states of machines
based on rank-2 SDP. Therefore, the problem of recovering the best rounding
of such states needs to be specifically addressed. The main result of our
theoretical analysis of the \mdII{} model in the present paper is that the
\mdII-machine, by design, delivers such rounding. This follows from the observation
that dynamics $\boldsymbol{\xi}(t)$ does not depend on the choice of the
rounding center and an essential property of the \mdII-machine that
(small) perturbations of binary states cannot reduce the cut.

\begin{theorem} \label{thm:mdii-non-worsening} Let a binary state
  $\boldsymbol{\xi} = \boldsymbol{\sigma}$ be displaced by
  $\mathbf{X} \in (-1, 1]^N$, then the terminal machine's state yields cut of
  at least the same size as in the initial state:
  \begin{equation}\label{eq:mdII-non-worsening}
    \mathcal{C}\left(
      \boldsymbol{\xi}\left( \infty \vert \boldsymbol{\sigma} + \mathbf{X} \right)  \right)
    \geq \mathcal{C}(\boldsymbol{\sigma}).
  \end{equation}
\end{theorem}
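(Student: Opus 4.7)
The plan is to track the piecewise-constant \emph{rounded} configuration $\boldsymbol{\sigma}(t)$ extracted from the trajectory via the canonical decomposition $\boldsymbol{\xi}(t) = \boldsymbol{\sigma}(t) + \mathbf{X}(t) \pmod P$ and show that the cut $\mathcal{C}(\boldsymbol{\sigma}(t))$ is non-decreasing along the evolution---in effect, that the \mdII\ dynamics implements a continuous-time $1$-opt local search on the rounded state. Combined with the rounding invariance at the terminal critical point established in Theorems~\ref{thm:extremal-manifolds} and~\ref{thm:r-invariant-rounding}, which gives $\mathcal{C}(\boldsymbol{\xi}(\infty)) = \mathcal{C}(\boldsymbol{\sigma}(\infty))$ under the extended notation introduced after Theorem~\ref{thm:r-invariant-rounding}, that monotonicity delivers~\eqref{eq:mdII-non-worsening} at once, since $\mathcal{C}(\boldsymbol{\sigma}(0)) = \mathcal{C}(\boldsymbol{\sigma})$.

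First, I would observe that $\boldsymbol{\sigma}(t)$ is constant except at isolated moments when some $X_m(t)$ reaches the boundary of $(-1,1]$, equivalently when $\xi_m(t) \pmod 4$ crosses $0$ or $2$; between such flipping events $\mathcal{C}(\boldsymbol{\sigma}(t))$ is literally unchanged. At each flipping event I would evaluate $\dot{\xi}_m$ just before the crossing from the equation of motion~\eqref{eq:mdII-eq-motion}. A short case analysis of the four possible crossings shows that after substituting $\xi_n = \sigma_n + X_n$ into $\phi_{\IIind}(\xi_m - \xi_n)$, each contribution simplifies to $\pm \sigma_n/2$, so that $\dot{\xi}_m(t^{*-}) = \pm F_m/2$ with $F_m = \sum_n A_{m,n} \sigma_m \sigma_n$ evaluated just before the flip. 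The overall sign is always arranged so that the direction of the crossing matches the direction of the spin flip that raises the cut: the flip can occur only when $F_m \geq 0$, and it changes $\mathcal{C}(\boldsymbol{\sigma})$ by exactly $F_m \geq 0$. Hence every dynamically-induced flip is cut-non-decreasing, establishing the required monotonicity.

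The principal technical obstacle I anticipate is the discontinuity of $\phi_{\IIind}$ across switching surfaces, together with non-generic events in which several $X_m$'s hit a boundary simultaneously, or in which the initial $\mathbf{X}$ already contains coordinates equal to $1$. Under the paper's convention $\sgn(0) = 0$, for trajectories avoiding this zero-measure coincidence set the case analysis above goes through verbatim; a fully rigorous treatment of the degenerate situations fits into the Filippov framework of~\cite{filippovDifferential1988, cortesDiscontinuous2008} that the authors defer to subsequent work.
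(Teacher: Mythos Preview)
Your proposal is correct and follows essentially the same approach as the paper: both track the rounded state $\boldsymbol{\sigma}(t)$ along the trajectory, show via the equation of motion that any spin flip at a boundary crossing must satisfy the $1$-opt improvement condition $F_m>0$ (so that only finitely many flips can occur and each one increases the cut), and then invoke Theorem~\ref{thm:r-invariant-rounding} at the terminal critical point. The paper carries out the velocity computation in the $(\boldsymbol{\sigma},\mathbf{X})$ variables via Eq.~\eqref{eq:mdII-eqmo-sx} rather than in $\boldsymbol{\xi}$ via Eq.~\eqref{eq:mdII-eq-motion}, which shortens the case analysis, but the argument is otherwise identical.
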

\begin{proof}
  There are two mutually excluding scenarios of how the machine may evolve.

  The first scenario is when none of $X_m$, $m = 1, \ldots, N$, crosses
  $-1$ or $1$, so that $\boldsymbol{\sigma}(t)= \boldsymbol{\sigma}$. Then, by virtue
  of Theorem~\ref{thm:r-invariant-rounding}, the terminal state has the
  same cut as $\boldsymbol{\sigma}$.

  The second scenario occurs when one of $\left\{ X_m \right\}$, say,
  $X_p$, passes through $-1$ or $1$ leading to inverting the spin,
  $\sigma_p \to -\sigma_p$. As will be shown below, the new state has the cut strictly
  larger than $\mathcal{C}(\boldsymbol{\sigma})$. For the new state, we again have
  dynamics of a perturbed binary state. This dynamics also proceeds
  according to one of the two scenarios and so on. Since the total
  variation of cut is finite, $\boldsymbol{\sigma}$ may change only a finite
  number of times. Thus, the evolution arrives at the terminal state
  without decreasing the cut size.

  To complete the proof, we must show that the binary component may
  change only by increasing the cut. We consider the case when the
  variation occurs at crossing $-1$. Let $X_p$ be the component, which is
  about to cross $-1$, that is $X_p = \min_m X_m$ and $\dot{X}_p < 0$.
  Expanding Eq.~\eqref{eq:mdII-eqmo-sx}, we obtain
  \begin{equation}\label{eq:mdII-marginal-rate}
    \dot{X}_p = - \frac{1}{2} \sum_{n} A_{p,n} \sigma_p \sigma_n.
  \end{equation}
  Hence, $\dot{X}_p < 0$ only if $F(p) > 0$ (see Eq.~\eqref{eq:NMR}) and,
  therefore, inverting $\sigma_p$ increases the cut. The same conclusion holds
  when crossing occurs at $1$, or when a group of spins characterized by
  $X_m = X^{(p)}$ crosses $-1$ or $1$. It should be noted that while the
  condition $F_p > 0$ has locally the same form as for the greedy search,
  the structure of the terminal states of the \mdII-machine is distinctly
  different. The reason is the ability of multiple spins to invert
  their signs in the form of the cluster crossing the $\pm1$ boundary. The
  underlying reason is that the terminal states are determined by the
  \emph{global} structure of $\mathbf{X}$ (see
  Theorem~\ref{thm:prob_improve}).
\end{proof}

This property, weak perturbations do not decrease (by the time when the
terminal state is reached) the cut, is a distinguishing feature of the
\mdII-machine. For example, the machine based on rank-2 SDP does not have
this property. For such a machine, perturbing a binary state may lead to a
reduced cut.

Finally, applying these results to the progression of a non-critical state,
we obtain our main result. Starting from an arbitrary state, the
\mdII-machine terminates in a state with cut, which is not worse than
produced by the best rounding of the initial state.

\begin{theorem} \label{thm:optimal-rounding} Let the machine be initially
  in a non-critical state $\boldsymbol{\xi}(0)$ with
  $\overline{\mathcal{C}}\left( \boldsymbol{\xi}(0) \right) = \max_r \mathcal{C} \left(
    \widehat{R}_r[\boldsymbol{\xi}] \right)$ being the maximum cut that can
  be obtained by rounding it, and let
  $\boldsymbol{\xi}\left( \infty \vert \boldsymbol{\xi}^{(0)} \right)$ be the machine
  terminal state, then
  \begin{equation}\label{eq:dynamics-optimal-rounding}
    \mathcal{C} \left( \boldsymbol{\xi}\left( \infty \vert \boldsymbol{\xi}(0) \right) \right)
            \geq  \overline{C}\left( \boldsymbol{\xi}(0) \right).
  \end{equation}
\end{theorem}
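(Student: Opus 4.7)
The plan is to reduce Theorem~\ref{thm:optimal-rounding} to Theorem~\ref{thm:mdii-non-worsening} by shifting $\boldsymbol{\xi}^{(0)}$ so that, after translation, it becomes a $(-1,1]$-bounded perturbation of exactly the binary state that achieves the optimal rounding, and then transporting the cut inequality back through the shift.

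First, I would select a rounding center $r^\star \in [0,2)$ for which $\boldsymbol{\sigma}^\star := \widehat{R}_{r^\star}\bigl[\boldsymbol{\xi}^{(0)}\bigr]$ realizes $\mathcal{C}(\boldsymbol{\sigma}^\star) = \overline{\mathcal{C}}(\boldsymbol{\xi}^{(0)})$. Existence is immediate: $\widehat{R}_r[\boldsymbol{\xi}^{(0)}]$ is piecewise constant in $r$ and takes only finitely many distinct values over one period, so the maximum is attained. By the decomposition~\eqref{eq:mdII-r-mappings}, the initial state can then be written as $\boldsymbol{\xi}^{(0)} = r^\star \boldsymbol{1} + \boldsymbol{\sigma}^\star + \mathbf{X}^\star \pmod{P}$ with $\mathbf{X}^\star \in (-1,1]^N$, which is precisely the input format required by Theorem~\ref{thm:mdii-non-worsening}.

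Second, I would use the translational invariance of $\mathcal{C}_{\IIind}$: because the right-hand side of~\eqref{eq:mdII-eq-motion} depends on the differences $\xi_m - \xi_n$ only, the flow commutes with the uniform shift $\boldsymbol{\xi} \mapsto \boldsymbol{\xi} + a\boldsymbol{1}$. Hence the trajectory from $\boldsymbol{\xi}^{(0)}$ is the $r^\star\boldsymbol{1}$-translate of the trajectory from $\boldsymbol{\sigma}^\star + \mathbf{X}^\star$, and their terminal states are related by the same shift. Theorem~\ref{thm:mdii-non-worsening} applied to the latter gives
\[
  \mathcal{C}\bigl(\boldsymbol{\xi}(\infty \vert \boldsymbol{\sigma}^\star + \mathbf{X}^\star)\bigr) \geq \mathcal{C}(\boldsymbol{\sigma}^\star) = \overline{\mathcal{C}}(\boldsymbol{\xi}^{(0)}).
\]
Since both terminal states are critical, Theorem~\ref{thm:r-invariant-rounding} guarantees that the cut associated with each is defined independently of the rounding center, so the uniform shift by $r^\star\boldsymbol{1}$ leaves the cut unchanged, and~\eqref{eq:dynamics-optimal-rounding} follows.

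The only delicate point in this program is verifying that the flow truly commutes with uniform translations despite the discontinuities of $\phi_{\IIind}$ at the switching surfaces $\xi_m = \xi_n$. This is where I expect the main (mild) technical obstacle to sit. Fortunately, the switching surfaces are themselves translation-invariant, and the convention $\sgn(0)=0$ enforced on both trajectories makes the crossing prescriptions identical, so the equivariance survives the discontinuities; a more thorough justification can be deferred to the promised rigorous treatment within the framework of discontinuous dynamical systems.
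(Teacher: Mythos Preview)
Your proposal is correct and follows essentially the same route as the paper's proof: choose the rounding center that realizes the optimal rounding, view the initial state as a perturbation of the corresponding binary state, and invoke Theorem~\ref{thm:mdii-non-worsening}. Your version is simply more explicit about the translational equivariance of the flow and about appealing to Theorem~\ref{thm:r-invariant-rounding} to make the cut of the shifted terminal state well-defined, points the paper leaves implicit in its two-sentence argument.
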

\begin{proof}
  Since choosing the rounding center does not change
  $\mathcal{C}_{\IIind}\left(\boldsymbol{\xi}(t)\right)$, we can consider the dynamics
  as emerging from the perturbation of the best rounding state. Then, by
  virtue of Theorem~\ref{thm:mdii-non-worsening}, the cut cannot decrease.
\end{proof}

\section{Computational performance of the \mdII-machine}
\label{sec:numerics}

The computational effort of the Ising machine driven by the \mdII{} model
is represented by
terminal states of the \mdII{} model: $\boldsymbol{\xi}\left( \infty \vert
  \boldsymbol{\xi}(0) \right)$, where $\boldsymbol{\xi}(0)$ is the initial
state of the \mdII-machine.
A proper investigation of the evolution of probability distributions on the
phase space 
requires developing special approaches,
which are beyond the scope of the present paper. Therefore, we limit
ourselves to discussing only basic computational capabilities of the model
and their empirical demonstration.

In numerical experiments, the machine state is described by variables
$(\boldsymbol{\sigma}, \mathbf{X})$ with the update rule implemented using
procedure \textproc{Update} listed below. The dynamics of the \mdII{} model
was simulated using the Euler approximation. This approximation is exact
outside of
the discontinuities of the dynamical equations~\eqref{eq:mdII-eq-motion}
or~\eqref{eq:mdII-eqmo-sx}. On the other hand, since the magnitude of
$\dot{\boldsymbol{\xi}}$ does not depend on the proximity to equilibrium, the
Euler approximation with a fixed time step demonstrates spurious
oscillations near the critical point as characteristic to discontinuous
dynamical systems (see, e.g.~\cite{guglielmiEfficient2022}). Since reaching
equilibrium is important to ensure that main theorems in the previous
section
hold, these oscillations are expected to negatively impact the performance
of the \mdII-machine. On the other hand, the Euler approximation
effectively selects a regularization of
$\mathcal{C}_{\IIind}(\boldsymbol{\xi})$, which alleviates difficulties associated with
discontinuities of the dynamical equations of motion. From this
perspective, the numerical results presented below demonstrate the
robustness of the favorable features of the \mdII{} model with respect to
realizations of discontinuous dynamics.

\begin{algorithmic}[0]
  \Procedure{Update}{input $\left( \boldsymbol{\sigma}, \mathbf{X} \right)$,
    $\Delta\mathbf{X}$; output $\left( \boldsymbol{\sigma}, \mathbf{X} \right)$}
  \Ensure $\norm{\Delta \mathbf{X}}_\infty < 2$
  \State $\mathbf{X} \gets \mathbf{X} + \Delta\mathbf{X}$
  \ForAll{$m \in \left\{ 1, \dots, N \right\}$ }
  \If{$X_m > 1$}
     \State $X_m \gets X_m -2$
     \State $\sigma_m \gets -\sigma_m$
     \ElsIf{$X_m \leq -1$}
     \State $X_m \gets X_m + 2$
     \State $\sigma_m \gets -\sigma_m$
     \EndIf
  \EndFor
  \EndProcedure
\end{algorithmic}

The main features determining the computational capabilities of the \mdII{}
model are expressed by Theorems~\ref{thm:max_min_cut}
and~\ref{thm:mdii-non-worsening}. Starting from a random state
$(\boldsymbol{\sigma}, \mathbf{X})$
the evolution governed by Eq.~\eqref{eq:mdII-eqmo-sx}
is only guaranteed to terminate in a state, which is the optimally rounded
initial state. However, unless this state is happen to yield the maximum
cut, the state is not stable with respect to random perturbations.
Consequently, random agitations of binary states obtained from the terminal
state of the \mdII-machine with positive probability may lead to
improvement of the solution obtained by the machine.

\begin{theorem}\label{thm:prob_improve}
  Let $\boldsymbol{\sigma}$ be a non max-cut binary state,
  $\mathcal{C}(\boldsymbol{\sigma}) < \overline{C}_{\sset{G}}$. Let
  $\sset{A}(\boldsymbol{\sigma}) = \left\{ \mathbf{X} \in (-1, 1)^N:
    \widetilde{C}_{\IIind}(\boldsymbol{\sigma}, \mathbf{X}) > 0 \right\}$ be the
  set of agitations leading to an improved solution:
  $\mathcal{C}\left( \boldsymbol{\xi}\left( \infty \vert \boldsymbol{\sigma} + \mathbf{X} \right)
  \right) > \mathcal{C}(\boldsymbol{\sigma})$ for all
  $\mathbf{X} \in \sset{A}(\boldsymbol{\sigma})$. Then
  $\abs{\sset{A}(\boldsymbol{\sigma})} > 0$, that is $\sset{A}(\boldsymbol{\sigma})$
  has finite volume in $(-1, 1)^N$.
\end{theorem}
\begin{proof}
  Set $\sset{A}(\boldsymbol{\sigma})$ is nonempty, as follows from
  Theorem~\ref{thm:max_min_cut}. In turn, the dynamics governed by
  Eqs.~\eqref{eq:mdII-eqmo-sx} results in non-decreasing cut. Then, the
  theorem's statement follows from set $\sset{A}(\boldsymbol{\sigma})$ being
  open: with any point $\mathbf{X}$, it contains a ball of finite radius in
  $(-1, 1)^N$ centered at $\mathbf{X}$, which, in turn, follows from
  $\widetilde{C}_{\IIind}(\boldsymbol{\sigma}, \mathbf{X})$ being continuous
  function of $\mathbf{X} \in (-1, 1)^N$.
\end{proof}

\begin{figure}[tb]
  \centering
  \includegraphics[width=3in]{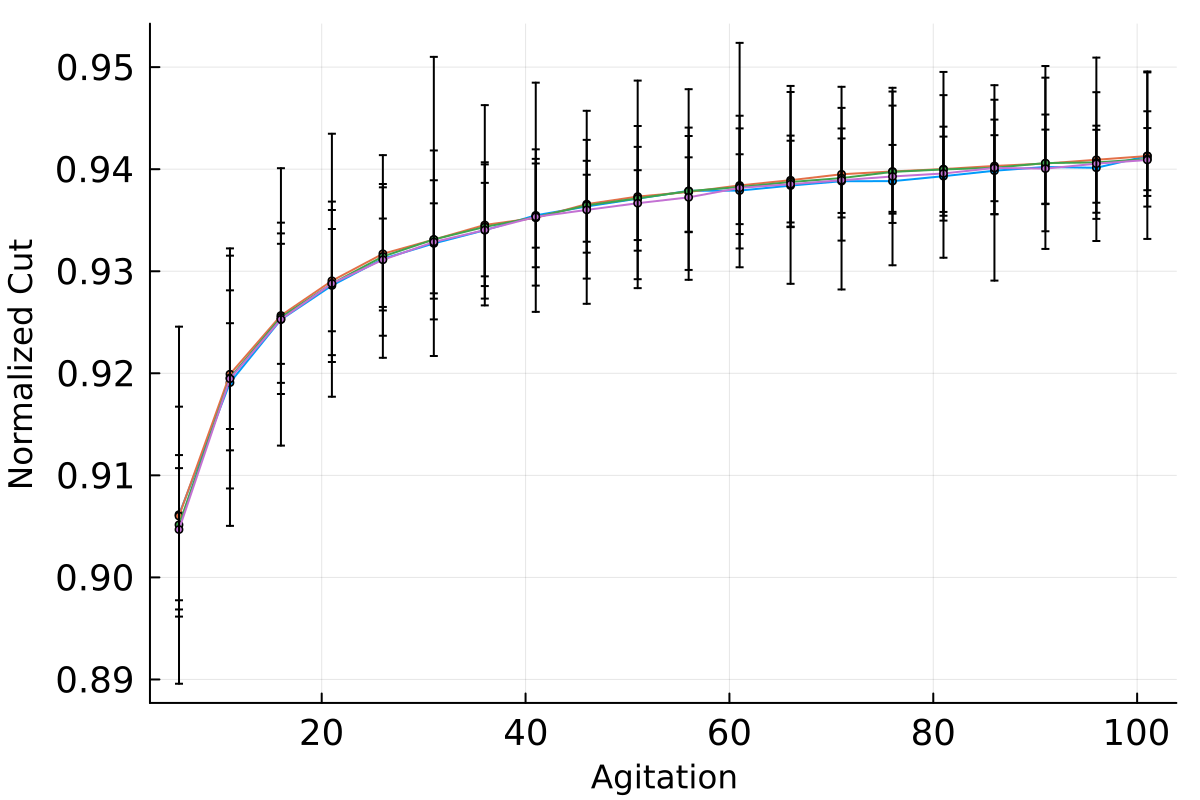}
  \caption{The dependence of the normalized cut,
    $c_D = \left( \mathcal{C} / M - 1 /2 \right) \sqrt{D} / P_*$, obtained by the
    \mdII-machine on the number of agitations for four random
    $(D = 3)$-regular graphs with $6,400$, $12,800$, $25,000$ and $40,000$
    nodes. For each number of agitations,
    $c_D\left( \boldsymbol{\sigma}^{(h)} \right)$ was evaluated for $100$ random
    initial states. The solid line and the
    error bars show the mean, and the maximal and minimal values of
    $c_D\left( \boldsymbol{\sigma}^{(h)} \right)$. }
  \label{fig:agg_progression}
\end{figure}

To employ this property, we introduce an agitated progression of the
\mdII-machine: when the machine reached the terminal state (as was shown in
Section~\ref{sec:basic-terminals}, the time required for reaching the terminal
state is finite), the continuous component is discarded and reinitialized
by choosing $\mathbf{X}$ randomly from $(-1, 1)^N$. Starting from an
arbitrary binary state $\boldsymbol{\sigma}^{(0)}$, this procedure results in a
sequence of binary states
$\boldsymbol{\sigma}^{(h)} = \widehat{R}_{r}\left[ \boldsymbol{\xi}\left( \infty \vert
    \boldsymbol{\sigma}^{(h-1)} + \mathbf{X}^{(h)} \right) \right]$, where
$h$ is the number of agitations and $\mathbf{X}^{(h)} \in (-1, 1)^N$ is the
agitation vector. By virtue of Theorem~\ref{thm:mdii-non-worsening},
$\mathcal{C}\left( \boldsymbol{\sigma}^{(h+1)} \right) \geq \mathcal{C}\left( \boldsymbol{\sigma}^{(h)}
\right)$, and from Theorem~\ref{thm:prob_improve} immediately follows
\begin{corollary}
  Let $X_m^{(h)}$, the components of the agitation vector
  $\mathbf{X}^{(h)}$, be uniformly distributed in $(-1, 1)$, then the
  agitated progression converges to a maximum-cut binary state
  \begin{equation}\label{eq:agit_converge}
    \lim_{h \to \infty} \mathcal{C}\left( \boldsymbol{\sigma}^{(h)} \right) = \overline{C}_{\sset{G}},
  \end{equation}
  with probability $1$.
\end{corollary}

Figure~\ref{fig:agg_progression} illustrates the progression of the
\mdII-machine towards the maximum cut with the number of agitations for the
example of four random $3$-regular graphs with $6,400$, $12,800$, $25,000$
and $40,000$ nodes. Regular graphs were chosen because the normalized
maximum cut of $D$-regular random graphs has a well-established asymptotic
with the number of nodes:
$\overline{c}_D = \left( \frac{\overline{\mathcal{C}}_{\sset{G}}}{M_{\sset{G}}} -
  \frac{1}{2} \right) \frac{\sqrt{D}}{P_*} \sim 1$, where
$M_{\sset{G}} = D N_{\sset{G}} / 2$ is the number of graph edges and
$P_* \approx 0.763166$ is the Parisi constant~\cite{demboExtremal2017}. To
emphasize the high values of the normalized cut obtained by the \mdII{}
model, we note that the SDP relaxation yields
$c_D^{(SDP)} = 2 /(\pi P_*) \approx 0.834$~\cite{fanHow2017}.

The property to obtain
improved solutions by means of the agitated progression endows the
\mdII-model with its own non-trivial computational capabilities.
Figure~\ref{fig:cmp_scaling} demonstrates these capabilities by comparing
the \mdII-machine on a set of random $3$-regular graphs with $1$-opt local
search (greedy search ensuring that for each node at least half of the
incident edges are cut) and a software simulation of the coherent Ising
machine~\cite{yamamotoCoherent2017}. The coherent Ising machine
(CIM)~\cite{yamamotoCoherent2017} was simulated using package
\verb|cim_optimizer|~\cite{Chen_cim-optimizer_a_simulator_2022}. 

\begin{figure}[tb]
  \centering
  \includegraphics[width=3.5in]{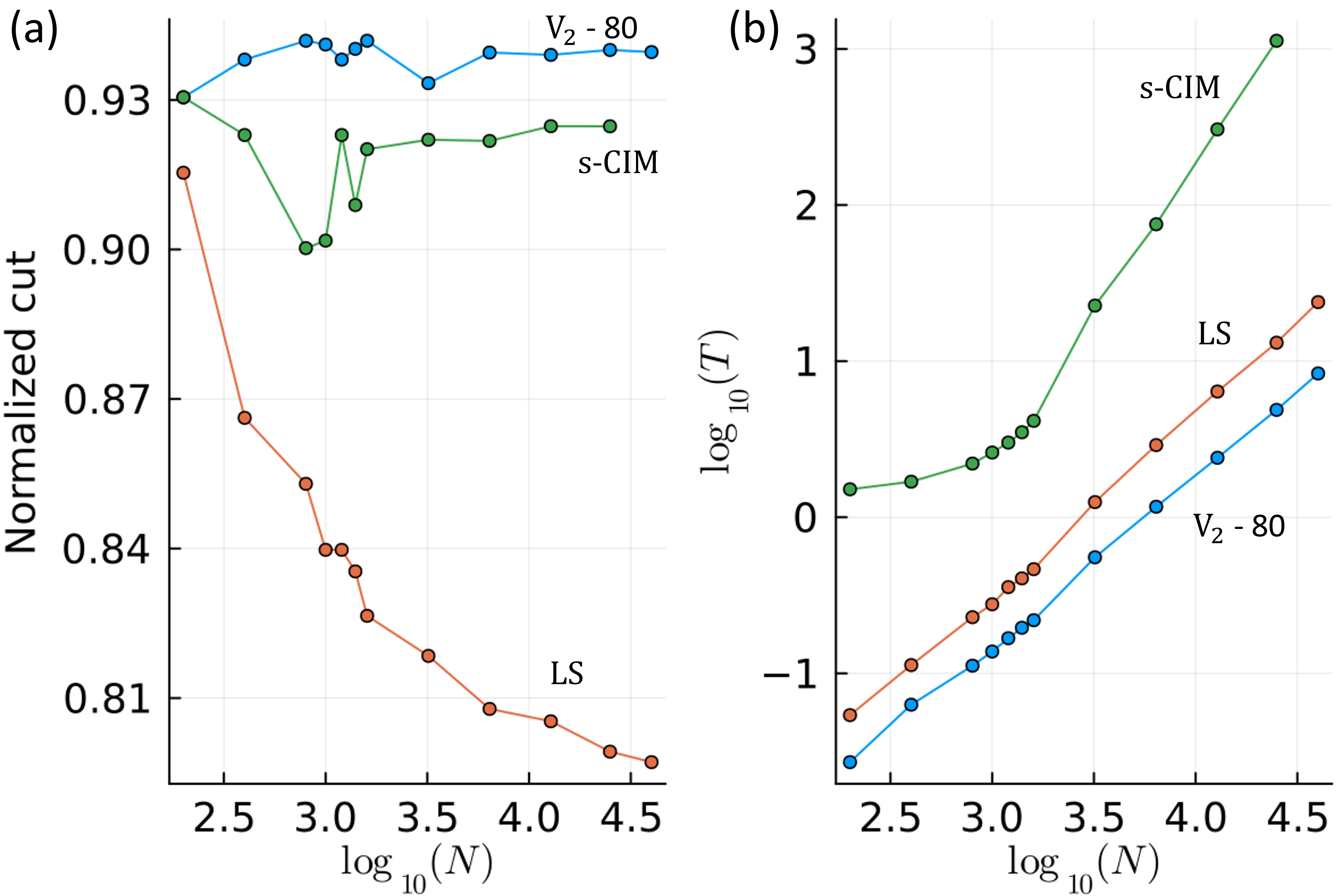}
  \caption{Comparison of the performance of the \mdII-machine ($h=80$
    agitations) with $1$-opt local search (the best value out of $8000$
    restarts) and the simulated coherent Ising
    machine on a set of random $3$-regular graphs. (a) The normalized value
    of cut as a function of the number of nodes. (b)
    Scaling of wall-time with the number of nodes.}
  \label{fig:cmp_scaling}
\end{figure}

The increasing discrepancy between the local search and the
\mdII{} and the s-$\mathrm{CIM}$ lines in Fig.~\ref{fig:cmp_scaling}(a)
clearly demonstrates that both the \mdII{} machine and the coherent Ising
machine converge to their terminal states following mechanisms that are
distinct from that of the greedy search algorithm. This distinction is
especially important for the \mdII-machine as the condition for changing
the binary component [$F_p > 0$ with $F_p$ defined in Eq.~\eqref{eq:NMR}]
has locally the same form as that of the greedy search, which may create an
impression of a tight relation between the final states obtained by the
\mdII-machine and the greedy search.

\begin{figure}[tb]
  \centering
  \includegraphics[width=3.1in]{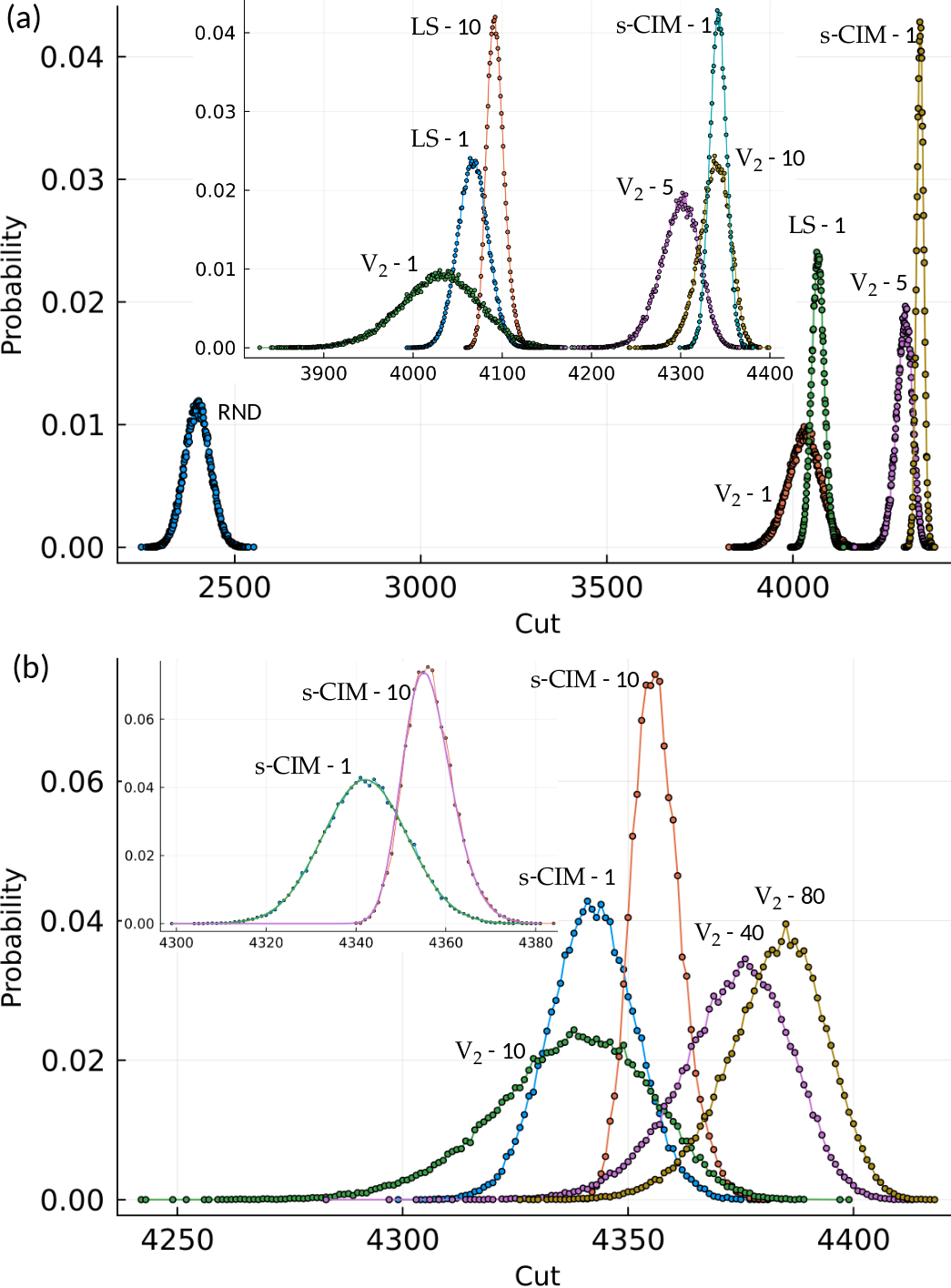}
  \caption{Distributions of cuts obtained by the \mdII-machine (lines
    $\mathrm{V}_2$-$h$, where $h$ is the number of agitations), $1$-opt
    local search and simulated coherent Ising machine (lines
    $\mathrm{LS}$-$p$ and s-$\mathrm{CIM}$-$p$, where $p$ denotes the best
    out of $p$ runs). (a) Overall comparison of the distributions of cuts
    produced by random partitioning (RND), local search, and the
    \mdII-machine. The inset shows the details of the distribution
    functions on intermediate values of obtained cut. (b) The details of
    the distribution functions at high cut values. The inset shows fitting
    the s-CIM's empirical distribution functions by the normal
    distribution.}
  \label{fig:distr-cmp}
\end{figure}

Figure~\ref{fig:cmp_scaling} does not reflect the probabilistic nature
of the compared approaches. To get a better insight,
Fig.~\ref{fig:distr-cmp}(a,b) show empirical probability distribution
functions obtained by sampling the results obtained starting from at least
$6 \cdot 10^4$ random initial conditions for a random $3$-regular graph with
$3,200$ nodes. These results demonstrate that the improvement of the
quality of solutions obtained by the \mdII{} model with the number of
agitations has a different origin than mere reiteration of the machine.
This observation is emphasized by Fig.~\ref{fig:distr-cmp}(b) depicting the
empirical distribution functions at high values of obtained cuts. Using the
observation that the distribution of the outcomes of the coherent Ising
machine is well-approximated by the Gaussian function (with the mean value
and variance equal to $\mu_{\mathrm{CIM}} = 4342$ and
$v_{\mathrm{CIM}} = 9.5$, respectively), we can easily estimate the number
of reiteration of the simulated coherent Ising machine to match the maximum
of the distribution of the \mdII{} results. The cumulative distribution
function of the best out of $p$ runs is
$P_p(\mathcal{C} < c) = P_1^p(\mathcal{C} < c)$ with the respective distribution function
$\rho_p(\mathcal{C} = c) = p \rho_1 (\mathcal{C} = c) P_1^{p-1}(\mathcal{C} < c)$. The inset in
Fig.~\ref{fig:distr-cmp}(b) shows that the best-out-of-$p$ distribution
derived from the fitted Gaussian function approximates well the empirical
distribution function. Then, the number of reruns needed to match the most
probable outcome of the coherent Ising machine with the mean value of the
agitated \mdII-machine outcomes (it does not exceed the most probable
value) is
\begin{equation}\label{eq:matching_cim_v2}
  k_h = 1 + \sqrt{\pi} \left[ 1 + \erf(d_h) \right] d_h e^{d_h^2},
\end{equation}
where 
$d_h = \left( \mu_{h} - \mu_{\mathrm{CIM}} \right) / v_{\mathrm{CIM}}
\sqrt{2}$, and $\mu_h$ is the mean value of the cut produced by the
\mdII-machine after $h$ agitations. For the graph used in
Fig.~\ref{fig:distr-cmp}, we have $\mu_{20} = 4359$, $\mu_{40} = 4373$, and
$\mu_{80} = 4382$, which yields $k_{20} \approx 22$, $k_{40} \approx 1.7\cdot10^3$, $k_{80} \approx
7.5 \cdot 10^{4}$, respectively. 


\section{Conclusion}

The connection between the ground state of the classical spin system and
the max-cut problem provides means for evaluating and projecting the
computational capabilities of Ising machines, for instance, how their
performance will scale with the problem size. In recent
papers~\cite{erementchoukComputational2022, shuklaScalable2022,
  shuklaCustom2023}, we approached the problem of designing Ising machines
from the perspective of employing dynamical models inspired by
relaxation-based techniques, such as the SDP relaxation. These techniques
are known to demonstrate favorable scaling properties. However, the
respective dynamical systems settle in a non-binary state, and the
final state must be rounded to recover a feasible solution to the Ising or
max-cut problem. The known rounding algorithms, even in the case of rank-2
relaxations, associating spins with planar unit vectors, require
external processing power. This makes the information processing flow in
relaxation-based Ising machines incomplete.

We show that a special dynamical model (we call it the \mdII{} model)
possesses the key property: given a non-binary initial state, it evolves
towards a trivially rounding state, which yields the cut that, at least, is
not smaller than obtained by the best rounding of the initial state.

Another important property of the \mdII{} model is tightly related to the
ability to deliver the best rounding. We show that if the \mdII{} model
evolves from a slight perturbation of a binary state, it ends up in a state
producing a cut not smaller than that of the original binary state, which
enables improving the solution quality. We show that such an agitated
machine converges to a maximum cut state almost surely.

We demonstrate favorable computational capabilities of the \mdII-machine by
comparing it numerically with the $1$-opt local search and coherent Ising
machine on random $3$-regular graphs.

Thus, incorporating the \mdII{} model as the final stage of a heterogeneous
Ising machine eliminates the necessity for the external processing of
relaxation-based Ising machines and makes them self-contained.
Consequently, any dynamical system with the phase space consistent with the
\mdII{} model can be used to drive the Ising machine and can be chosen
solely on the ground of computational performance on particular instances
of optimization problems.

\section*{Acknowledgements}
The work was partially supported by the US National Science Foundation (NSF) under
Grant No. 1909937 and Grant No. 2531175.



\end{document}